\documentclass[11pt]{article}

\usepackage[a4paper,margin=1in]{geometry}
\usepackage{authblk}
\usepackage{amsmath,amssymb,amsthm,mathtools}
\usepackage{graphicx}
\usepackage{hyperref}
\hypersetup{hidelinks}
\usepackage{enumitem}
\usepackage{xcolor}
\numberwithin{equation}{section}

\newtheorem{theorem}{Theorem}[section]
\newtheorem{proposition}[theorem]{Proposition}
\newtheorem{lemma}[theorem]{Lemma}

\newtheorem{definition}[theorem]{Definition}
\newtheorem{remark}[theorem]{Remark}

\newcommand{\Z}{\mathbb Z}

\title{Exact Results for the Symmetric Dyson Exclusion Process} \author[1]{A. Zahra} \author[2]{J. Dubail} \author[3]{G. M. Sch\"utz} \affil[1]{Laboratoire de Physique et Chimie Th\'eoriques, Universit\'e de Lorraine, Nancy, France } \affil[2]{Centre Europ\'een de Sciences Quantiques and ISIS (UMR 7006), Universit\'e de Strasbourg and CNRS, Strasbourg, France } \affil[3]{Centro de An\'alise Matem\'atica, Geometria e Sistemas Din\^amicos, Departamento de Matem\'atica, Instituto Superior T\'ecnico, Universidade de Lisboa, Lisbon, Portugal } \date{\today}

\begin{document}
\maketitle
 
\begin{abstract}
The symmetric Dyson exclusion process (SDEP) is an exclusion process on the
lattice with a long-range logarithmic Coulomb-type interaction.  It appears in
several equivalent forms: as symmetric random walkers conditioned, in the
Doob-transform sense, not to collide; as the maximal-activity limit of a
conditioned SSEP; and as a ground-state transform of the spin- 1/2 XX
chain.  In this work, we
exploit this latter representation to obtain exact evolution formulas from
deterministic initial configurations. The resulting determinantal kernel
is expressed through a finite interpolation
expression in terms of Lagrange polynomials, leading to explicit density
evolution in finite and infinite lattices. For the melting of a densely packed
block, we show that the full hierarchy of density moments is governed by a
finite-dimensional polynomial algebra, and we identify Catalan numbers in the
leading time coefficients. We also derive the Euler-scale density profile and
the arctic curve separating frozen and liquid regions, and show that they coincide with conjectured hydrodynamic results obtained in a previous work.
\end{abstract}

\tableofcontents

\section{Introduction}
\label{sec:introduction}

Interacting particle systems provide some of the most successful microscopic
models of nonequilibrium statistical mechanics, and the exclusion process is
their paradigm~\cite{Spohn1991,KipnisLandim1999,Liggett2010}.  For systems with
short-range interactions, the passage from the microscopic stochastic dynamics
to a deterministic macroscopic description is by now well understood: under
diffusive scaling the coarse-grained density obeys a closed hydrodynamic
equation---the heat equation for the symmetric simple exclusion process
(SSEP)---as a consequence of the law of large numbers together with local
stationarity~\cite{Spohn1991,KipnisLandim1999}.  In one dimension, integrability
sharpens this picture considerably: for the asymmetric exclusion process the
master equation can be solved exactly~\cite{Schutz1997,Schutz2001}, and the
resulting determinantal, free-fermion structure underlies precise results on
current statistics, Kardar--Parisi--Zhang fluctuations, and limit
shapes~\cite{Johansson2000,BorodinFerrariPrahoferSasamoto2007,ImamuraMucciconiSasamoto2023}.

This well-developed picture relies crucially on the interactions being
short-ranged.  For particle systems with long-range interactions, exact results
are scarce, and even the notion of a local current as a function of the local
density becomes
questionable~\cite{Garrido1990,Liggett1980,AndjelGuiol2005,GoncalvesJara2018,BelitskyNgocSchutz2024}.
A tractable and paradigmatic member of this class is the symmetric Dyson
exclusion process (SDEP): a gas of particles hopping on a ring under exclusion,
with nearest-neighbour rates carrying a long-range, logarithmic (Coulomb-type)
interaction.  
The SDEP arises in several seemingly unrelated contexts.  It was introduced as
a model of vicinal surfaces~\cite{Spohn1999vicinal}, and it also admits an
interpretation as free symmetric walkers conditioned, in the Doob-transform
sense, never to collide.  It appears when the SSEP is conditioned on an
atypically large activity~\cite{PopkovSimonSchutz2010,Schutz2015}; in the limit
of maximal activity, this conditioning generates precisely the logarithmic
interaction.  Its invariant measure is the discrete circular Dyson log-gas,
namely the lattice analogue of the CUE eigenangle distribution
\cite{Dyson1962}.  The model is also closely tied to the Calogero--Sutherland
system~\cite{Calogero1969,Sutherland1972,AbanovBettelheimWiegmann2009}.

The structural fact that makes the SDEP solvable is that its generator is the
ground-state (Doob) transform of the spin-$\tfrac12$ XX quantum chain, a
free-fermion model~\cite{LiebSchultzMattis1961,Niemeijer1967}.  In a companion
paper~\cite{ZahraDubailSchutz2025} we exploited this mapping to argue that, under
ballistic (Eulerian) scaling, the SDEP is governed by a closed but
\emph{non-local} hydrodynamic equation whose current depends on the entire
density profile through a Hilbert transform.
This non-local equation is equivalent to a local two-field complex Hopf
(inviscid Burgers) system of the type arising in free-fermion imaginary-time
hydrodynamics and limit-shape problems
\cite{Abanov2006,KenyonOkounkov2007,PallisterGangardtAbanov2022}, and it predicts limit shapes
and arctic curves that were confirmed by Monte-Carlo simulation.
The real-time dynamics of lattice free fermions launched from zero-entropy
states, including compact-block or double-domain-wall initial conditions, has
itself been studied extensively
\cite{AntalRaczRakosSchutz1999,
BettelheimAbanovWiegmann2006,RuggieroBrunDubail2019,
ScopaEtAl2021,ScopaCalabreseDubail2022}.
In the present work we use the free-fermion mapping not as a route to a
macroscopic conjecture, but as a tool for deriving \emph{exact} results, both at
finite size and in scaling limits.  Our starting point is a dictionary that
expresses any equal-time SDEP observable, evolved from a deterministic initial
configuration, as a free-fermion matrix element taken between the XX ground state
and that configuration.  Since both states are Slater determinants, Wick's
theorem applies and all correlations are determinantal; the relevant kernel is,
however, generically non-Hermitian, because it arises from a mixed matrix element
rather than from an ordinary quantum expectation.  Evaluating this kernel for
deterministic initial data reduces to a trigonometric Lagrange interpolation
problem, which we solve in closed form.  From this single input we obtain exact
finite-volume and infinite-volume formulas for the density profile at all times;
a universal finite-dimensional algebra for the density moments of arbitrary
deterministic finite configurations, whose highest-time coefficients have
Catalan asymptotics; an exact finite representation of the melting of densely
packed blocks and their enhanced one-body spreading; and a rigorous
steepest-descent analysis of the Euler-scaling limit that
yields the emergent limit shape and the boundary---the arctic curve---separating
the frozen regions, where the density is $0$ or $1$, from the liquid region.

These results connect the SDEP to several active lines of research.  The
determinantal correlations place it within the theory of determinantal point
processes that has proved so powerful for exclusion processes and random
matrices. The emergent limit shapes and arctic curves are of the kind familiar
from dimer coverings, the six-vertex model, and random tilings, where they are
analysed through complex Burgers equations and inhomogeneous free-field
techniques~\cite{KenyonOkounkov2007,AllegraDubailStephanViti2016,Gorin2021,ColomoPronko2010,StephanLectureNotes2021,DiFrancescoGuitter2019,PallisterGangardtAbanov2022}.
The block profiles we compute are close relatives of the emptiness formation
probability and the full counting statistics of the XX
chain~\cite{Abanov2006,AbanovFranchini2003,StephanEFP2014,PallisterEtAl2025}.
Finally, in the low-density limit our lattice results reduce to those of the
continuous Dyson gas of Brownian particles with logarithmic
repulsion~\cite{AndrausKatori2016,DandekarKrapivskyMallick2023,DandekarKrapivskyMallick2024,KrapivskyMallick2025},
the hard-core constraint $\rho\le1$ being the essential lattice feature absent in
the continuum.

The paper is organised as follows.
Section~\ref{sec:sdep_xx_transform} recalls the definition of the SDEP---its
hopping rates and generator, the mapping to the XX chain by the ground-state
transform, and the reversible Dyson measure---and fixes the notation used
throughout.  Section~\ref{sec:observable_dictionary} establishes the dictionary
between SDEP observables and free-fermion matrix elements and the resulting
determinantal structure.  Section~\ref{sec:mixed_correlations} evaluates the
mixed correlation kernel for deterministic initial configurations in closed form.
Section~\ref{sec:exact_density_evolution} derives the exact density-evolution
formulas in finite and infinite volume.  Section~\ref{sec:block_melting}
develops the universal polynomial moment algebra for deterministic initial data,
derives the first two moment identities, and specializes the general density
formula to obtain a finite representation and explicit centered moments for a
compact block.
Section~\ref{sec:block_euler_asymptotics} derives the Euler-scale limit shape
and the associated arctic curve.

\section{The SDEP and the XX ground-state transform}
\label{sec:sdep_xx_transform}

We recall the definition of the symmetric Dyson exclusion process (SDEP), following
\cite{ZahraDubailSchutz2025}.  The process is defined on the discrete torus
\(\mathbb T_L=\mathbb Z/L\mathbb Z\), with \(L\geq2\), in a sector with a fixed
number \(N\) of particles, \(1\leq N\leq L\).  Configurations are exclusion
configurations: each site contains at most one particle.  We write a configuration
as the ordered set of occupied sites
\[
    \eta=\{x_1<\cdots<x_N\}\subset\{1,\ldots,L\},
\]
with the usual periodic identification.

The dynamics consists of nearest-neighbour jumps.  A particle at \(x_i\) can move
to \(x_i\pm1\) if the target site is empty; otherwise the move is forbidden.  The
rate of an allowed jump is
\begin{equation}
    c_i^{\pm}(\eta)
    =
    w
    \prod_{j\neq i}
    \frac{
    \sin\left(\frac{\pi(x_i\pm 1-x_j)}{L}\right)
    }{
    \sin\left(\frac{\pi(x_i-x_j)}{L}\right)
    },
    \label{eq:sdep_rate_right}
\end{equation}
where \(w>0\) fixes the microscopic time scale.  Thus the motion is local in
space, since particles only jump to neighbouring sites, but the jump rates are
non-local: the rate of a given particle depends on the positions of all the
others.

A useful way to write the rates is in terms of the positive function
\begin{equation}
    \Phi_N(x_1,\ldots,x_N)
    =
    \prod_{1\leq i<j\leq N}
    \sin\left(\frac{\pi(x_j-x_i)}{L}\right),
    \qquad
    1\leq x_1<\cdots<x_N\leq L .
    \label{eq:Phi_def}
\end{equation}
Indeed, if \(\eta^{x,y}\) denotes the configuration obtained from \(\eta\) by
moving a particle from \(x\) to an empty neighbouring site \(y=x\pm1\), then
\eqref{eq:sdep_rate_right} can be written equivalently as
\begin{equation}
    c(\eta,\eta^{x,y})
    =
    w\,
    \frac{\Phi_N(\eta^{x,y})}{\Phi_N(\eta)},
    \qquad y=x\pm1.
    \label{eq:sdep_rates_abstract}
\end{equation}
This ratio form is independent of the chosen ordering convention and makes the
positivity of the allowed jump rates manifest.
\begin{remark}
\label{rem:noncolliding_walkers}
The ratio form \eqref{eq:sdep_rates_abstract} shows that the SDEP can be viewed
as a Doob transform of free walkers killed at collisions.  Indeed, start from
$N$ independent continuous-time symmetric random walkers on $\mathbb T_L$,
each jumping to each nearest neighbour with rate $w$, and kill the process
when two particles occupy the same site.  The function $\Phi_N$ vanishes on
the collision boundary and is the positive ground state of the killed process.
The Doob transform by $\Phi_N$ changes the free jump rate (w) into

$$w\frac{\Phi_N(\eta^{x,y})}{\Phi_N(\eta)},$$
which is exactly the SDEP rate. Thus the SDEP may be interpreted as the infinite-horizon conditioning of free
walkers to avoid collisions, or equivalently as the associated (Q)-process.
This is the lattice circular analogue of the classical construction of
non-colliding Brownian motions via Doob transforms and
Karlin--McGregor determinants; see, for example,
\cite{Doob1957,KarlinMcGregor1959,KatoriTanemura2007}.
\end{remark}

\subsection{The Markov generator}
\label{subsec:generator}

Formally, the SDEP is the continuous-time Markov process on the configuration
space
\[
    \Omega_{L,N}
    =
    \left\{
    \eta\in\{0,1\}^{\mathbb T_L}:
    \sum_{x\in\mathbb T_L}\eta_x=N
    \right\}
\]
whose generator acts on functions \(f:\Omega_{L,N}\to\mathbb R\) by
\begin{equation}
    (\mathcal L_{L,N}^{\mathrm{SDEP}}f)(\eta)
    =
    \sum_{\substack{x\in\mathbb T_L\\ y=x\pm1}}
    c(\eta,\eta^{x,y})
    \left[
    f(\eta^{x,y})-f(\eta)
    \right],
    \label{eq:sdep_generator}
\end{equation}
with the rates \eqref{eq:sdep_rates_abstract}.  Every result below is obtained by
analysing this generator, and the essential tool is its exact mapping to free
fermions, which we describe next.

\subsection{Free-fermion representation}
\label{subsec:freefermions}

Let \(c_x,c_x^\dagger\) be canonical fermionic annihilation and creation
operators on \(\mathbb T_L\), and let
\begin{equation}
    H_{L}^{XX}
    =
    -w
    \sum_{x=1}^{L}
    \left(
    c_{x+1}^\dagger c_x
    +
    c_x^\dagger c_{x+1}
    \right)
    \label{eq:xx_hamiltonian}
\end{equation}
be the free-fermion XX Hamiltonian, obtained from the spin-\(\tfrac12\) XX chain
by a Jordan--Wigner transformation.  In the \(N\)-particle sector the
single-particle momenta run over
\[
    \mathcal K_{L,N}
    =
    \left\{
    \frac{2\pi}{L}(m+\theta_N):
    m=0,\ldots,L-1
    \right\},
    \qquad \theta_N\in\{0,\tfrac12\},
\]
where $\theta_N$ keeps track of the fermionic boundary condition
induced by the Jordan--Wigner transformation.  For odd $N$, one has
$\theta_N=0$, corresponding to periodic boundary conditions,
$
c_{L+1}=c_1,
$
whereas for even $N$, one has $\theta_N=\frac{1}{2}$, corresponding to
antiperiodic boundary conditions,
$
c_{L+1}=-c_1.
$.
The single-particle dispersion is
\begin{equation}
    \varepsilon(k)=-2w\cos k.
    \label{eq:dispersion}
\end{equation}

The \(N\)-particle ground state \(|\mathrm{GS}_N\rangle\) is obtained by filling
the \(N\) momenta closest to \(0\).  In the position basis
\(|x_1,\ldots,x_N\rangle=c_{x_1}^\dagger\cdots c_{x_N}^\dagger|0\rangle\) its wave
function is the Vandermonde determinant \(\det(e^{ik_a x_b})_{a,b=1}^N\), which up
to an overall phase is precisely the positive function \eqref{eq:Phi_def}:
\begin{equation}
    \langle x_1,\ldots,x_N|\mathrm{GS}_N\rangle
    =
    \frac{1}{\sqrt{Z_{L,N}}}\,\Phi_N(x_1,\ldots,x_N).
    \label{eq:gs_wavefunction_phi}
\end{equation}
Summing the \(N\) lowest energies yields the ground-state energy
\begin{equation}
    E_{L,N}
    =
    -2w\,
    \frac{\sin(\pi N/L)}{\sin(\pi/L)}.
    \label{eq:ground_state_energy}
\end{equation}

The identity \eqref{eq:gs_wavefunction_phi} is what links the two models.  Since
\(\Phi_N>0\) and \(H_{L,N}^{XX}\Phi_N=E_{L,N}\Phi_N\), conjugating
\(H_{L,N}^{XX}\) by the diagonal multiplication operator
\((\widehat\Phi_N f)(\eta)=\Phi_N(\eta)f(\eta)\) produces a Markov generator,
namely the ground-state (Doob) transform
\begin{equation}
    \mathcal L_{L,N}^{\mathrm{SDEP}}
    =
    -\widehat\Phi_N^{-1}
    \left(
    H_{L,N}^{XX}-E_{L,N}
    \right)
    \widehat\Phi_N ,
    \label{eq:doob_transform}
\end{equation}
equivalently
\begin{equation}
    H_{L,N}^{XX}
    =
    E_{L,N}
    -
    \widehat\Phi_N
    \mathcal L_{L,N}^{\mathrm{SDEP}}
    \widehat\Phi_N^{-1}.
    \label{eq:inverse_doob_transform}
\end{equation}
Indeed, the off-diagonal element of \(H_{L}^{XX}\) between two configurations
related by an allowed jump is \(-w\), so
\(-\Phi_N(\eta)^{-1}H^{XX}_{\eta,\eta'}\Phi_N(\eta')=w\,\Phi_N(\eta')/\Phi_N(\eta)\)
recovers the rate \eqref{eq:sdep_rates_abstract}, while the eigenvalue equation
fixes the diagonal part so that the escape rates are those of a genuine Markov
generator.  Through \eqref{eq:doob_transform}, the SDEP semigroup
\(e^{t\mathcal L_{L,N}^{\mathrm{SDEP}}}\) is conjugate to the free-fermion
evolution \(e^{-t(H_{L,N}^{XX}-E_{L,N})}\), whose single-particle dynamics is
carried by the propagator
\begin{equation}
    p_t^{(L,N)}(x)
    =
    \frac{1}{L}
    \sum_{k\in\mathcal K_{L,N}}
    e^{-ikx}\,e^{t\varepsilon(k)},
    \label{eq:finite_volume_kernel}
\end{equation}
the superscript \(N\) recording only the boundary-condition sector.

\subsection{Invariant measure}
\label{subsec:invariant_measure}

Because \(\Phi_N\) is the ground-state wave function
\eqref{eq:gs_wavefunction_phi}, its squared amplitude defines a natural
probability measure on \(\Omega_{L,N}\),
\begin{equation}
    \pi_{L,N}(\eta)
    =
    \frac{1}{Z_{L,N}}\,\Phi_N(\eta)^2,
    \label{eq:dyson_measure}
\end{equation}
with normalization \(Z_{L,N}\).  This is the discrete circular Dyson measure: it
describes \(N\) particles on the ring with logarithmic repulsion
\[
    -2\sum_{1\leq i<j\leq N}
    \log\left|
    \sin\left(\frac{\pi(x_j-x_i)}{L}\right)
    \right| ,
\]
and coincides with the distribution of eigenvalue angles of random unitary
matrices undergoing Dyson's Brownian motion.

The measure \eqref{eq:dyson_measure} is reversible for the SDEP.  Indeed, for any
allowed jump \(\eta\to\eta^{x,y}\),
\[
    \pi_{L,N}(\eta)\,c(\eta,\eta^{x,y})
    =
    \frac{w}{Z_{L,N}}\,\Phi_N(\eta)\,\Phi_N(\eta^{x,y})
\]
is symmetric under \(\eta\leftrightarrow\eta^{x,y}\), so detailed balance
\begin{equation}
    \pi_{L,N}(\eta)\,c(\eta,\eta')
    =
    \pi_{L,N}(\eta')\,c(\eta',\eta)
    \label{eq:detailed_balance}
\end{equation}
holds and \(\pi_{L,N}\) is stationary.  Equivalently, reversibility reflects the
Hermiticity of \(H_{L,N}^{XX}\): the transform \eqref{eq:doob_transform} makes
\(\mathcal L_{L,N}^{\mathrm{SDEP}}\) self-adjoint in \(L^2(\pi_{L,N})\).

The ground-state transform is the mechanism behind all the exact formulas of the
following sections.  Through it, time-dependent SDEP observables map to
free-fermion matrix elements between the XX ground state
\(|\mathrm{GS}_N\rangle\) and the initial configuration, evolved with the
propagator \eqref{eq:finite_volume_kernel}.  We make this dictionary precise in
the next section.

\subsection{ Observable dictionary between SDEP and free-fermions}
\label{sec:observable_dictionary}

The ground-state transform does more than identify the SDEP generator with the
XX Hamiltonian.  It also gives a dictionary between observables of the classical
exclusion process and matrix elements of free fermions.  We record this
dictionary before turning to explicit formulas.

Let \(F:\Omega_{L,N}\to\mathbb R\) be a classical observable.  We associate with
\(F\) the diagonal operator \(\widehat F\) on the \(N\)-particle fermionic Fock
space by
\[
    \widehat F|\eta\rangle=F(\eta)|\eta\rangle .
\]
In particular,
\[
    \eta_x
    \quad\longleftrightarrow\quad
    c_x^\dagger c_x,
\]
and, for distinct sites \(x_1,\ldots,x_m\),
\[
    \eta_{x_1}\cdots \eta_{x_m}
    \quad\longleftrightarrow\quad
    c_{x_1}^\dagger c_{x_1}\cdots
    c_{x_m}^\dagger c_{x_m}.
\]
For a set of occupied sites
\[
    S=\{x_1<\cdots<x_N\}\subset\mathbb T_L,
\]
we denote by
\[
    |S\rangle
    :=
    c_{x_1}^\dagger\cdots c_{x_N}^\dagger|0\rangle
\]
the corresponding fermionic occupation-basis state.
\begin{proposition}
\label{prop:observable_dictionary}
Let the SDEP start from a deterministic configuration
\(S\in\Omega_{L,N}\).  Then, for every observable
\(F:\Omega_{L,N}\to\mathbb R\),
\begin{equation}
    \mathbb E_S[F(\eta(t))]
    =
    \frac{
    \langle \mathrm{GS}_N|
    e^{tH^{XX}_{L,N}}
    \widehat F
    e^{-tH^{XX}_{L,N}}
    |S\rangle
    }{
    \langle \mathrm{GS}_N|S\rangle
    }.
    \label{eq:observable_dictionary}
\end{equation}
\end{proposition}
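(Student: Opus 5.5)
The plan is to write the SDEP expectation through the transition kernel of the semigroup, and then turn this kernel into the XX propagator using the Doob transform \eqref{eq:doob_transform}. Identify functions on \(\Omega_{L,N}\) with vectors \(\sum_\eta f(\eta)|\eta\rangle\) in the \(N\)-particle sector. The Markov semigroup acts on observables, so
\[
    \mathbb E_S[F(\eta(t))]=\bigl(e^{t\mathcal L^{\mathrm{SDEP}}_{L,N}}F\bigr)(S)=\sum_{\eta}\bigl(e^{t\mathcal L}\bigr)_{S,\eta}F(\eta),
\]
where the matrix elements of \(\mathcal L\) are read off from \eqref{eq:sdep_generator}: \(\mathcal L_{\eta,\eta'}=c(\eta,\eta')\) off the diagonal. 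Exponentiating \eqref{eq:doob_transform} gives the conjugation
\[
    e^{t\mathcal L}=\widehat\Phi_N^{-1}e^{-t(H-E_{L,N})}\widehat\Phi_N,
\]
and hence the entrywise identity
\[
    \bigl(e^{t\mathcal L}\bigr)_{S,\eta}=\frac{\Phi_N(\eta)}{\Phi_N(S)}\,e^{tE_{L,N}}\langle S|e^{-tH}|\eta\rangle .
\]
Here \(H=H^{XX}_{L,N}\) is real symmetric in the occupation basis, with the sign conventions under which its off-diagonal entries are \(-w\), as stated after \eqref{eq:inverse_doob_transform}.

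Substituting this into the expectation, I would use \(\widehat F|\eta\rangle=F(\eta)|\eta\rangle\) together with \(\Phi_N(\eta)=\sqrt{Z_{L,N}}\,\langle \eta|\mathrm{GS}_N\rangle\) from \eqref{eq:gs_wavefunction_phi}. This gives
\[
    \mathbb E_S[F(\eta(t))]=\frac{e^{tE_{L,N}}\langle S|e^{-tH}\widehat F|\mathrm{GS}_N\rangle}{\langle S|\mathrm{GS}_N\rangle}.
\]
Next I would remove the energy factor with \(H|\mathrm{GS}_N\rangle=E_{L,N}|\mathrm{GS}_N\rangle\), writing \(e^{tE_{L,N}}|\mathrm{GS}_N\rangle=e^{tH}|\mathrm{GS}_N\rangle\). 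The numerator then becomes \(\langle S|e^{-tH}\widehat F e^{tH}|\mathrm{GS}_N\rangle\).

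Finally, everything is real: \(H\) and \(\widehat F\) are real symmetric and the ground-state amplitudes are real positive, up to the overall phase, which cancels between numerator and denominator. Taking the transpose of this matrix element therefore yields \(\langle \mathrm{GS}_N|e^{tH}\widehat F e^{-tH}|S\rangle\), and the denominator becomes \(\langle \mathrm{GS}_N|S\rangle\). This is exactly \eqref{eq:observable_dictionary}.

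The only delicate point is bookkeeping of signs and phases. Fermionic ordering signs in \(|S\rangle\) and the Jordan--Wigner boundary twist \(\theta_N\) must be consistent with the claim that the off-diagonal entries of \(H\) equal \(-w\) and that the ground-state amplitudes equal \(\Phi_N\) up to a global phase. I would take these as given from Section~\ref{subsec:freefermions}: the parity-dependent choice of \(\theta_N\) is exactly what cancels the boundary sign in the ordered basis. I would also note that, if a global phase \(\alpha\) appears in \eqref{eq:gs_wavefunction_phi}, it enters the numerator and the denominator identically and cancels. Apart from this, the argument is a direct computation.
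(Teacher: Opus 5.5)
Your proposal is correct and follows the paper's route: you conjugate the semigroup by $\widehat\Phi_N$ via the ground-state transform, use that $\widehat F$ is diagonal and that $\langle\eta|\mathrm{GS}_N\rangle\propto\Phi_N(\eta)$, and absorb $e^{tE_{L,N}}$ through the eigenvalue equation. The one step you make explicit that the paper leaves implicit is the transposition from $\langle S|\cdots|\mathrm{GS}_N\rangle$ to $\langle\mathrm{GS}_N|\cdots|S\rangle$, which rests on $H^{XX}_{L,N}$ and $\widehat F$ being real symmetric in the occupation basis; any global phase of the ground state cancels in the ratio, as you note.
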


\begin{proof}
The ground-state transform gives
\[
    e^{t\mathcal L^{\mathrm{SDEP}}_{L,N}}
    =
    \widehat\Phi_N^{-1}
    e^{-t(H^{XX}_{L,N}-E_{L,N})}
    \widehat\Phi_N .
\]
Since \(F\) is diagonal, \(\widehat F\) commutes with \(\widehat\Phi_N\).  Using
that
\[
    \langle \mathrm{GS}_N|\eta\rangle
    =
    Z_{L,N}^{-1/2}\Phi_N(\eta),
\]
and that the ground-state energy cancels between numerator and denominator, one
obtains \eqref{eq:observable_dictionary}.
\end{proof}

For the density, Proposition~\ref{prop:observable_dictionary} gives
\begin{equation}
    \rho_S(x,t)
    =
    \mathbb E_S[\eta_x(t)]
    =
    \frac{
    \langle \mathrm{GS}_N|
    e^{tH^{XX}_{L,N}}
    c_x^\dagger c_x
    e^{-tH^{XX}_{L,N}}
    |S\rangle
    }{
    \langle \mathrm{GS}_N|S\rangle
    }.
    \label{eq:density_dictionary}
\end{equation}
Thus a classical density is converted into a free-fermion two-point function,
but with a mixed matrix element: the left state is the XX ground state, whereas
the right state is the deterministic configuration \(S\).

More generally, define the time-dependent (mixed) kernel
\begin{equation}
    K_t^S(x,y)
    =
    \frac{
    \langle \mathrm{GS}_N|
    e^{tH^{XX}_{L,N}}
    c_x^\dagger c_y
    e^{-tH^{XX}_{L,N}}
    |S\rangle
    }{
    \langle \mathrm{GS}_N|S\rangle
    }.
    \label{eq:time_dependent_mixed_kernel}
\end{equation}
Then
\[
    \rho_S(x,t)=K_t^S(x,x).
\]

Since \(H^{XX}_{L,N}\) is quadratic and both
\(\langle\mathrm{GS}_N|\) and \(|S\rangle\) are Slater states, Wick
factorization holds.  Therefore, for distinct sites
\(x_1,\ldots,x_m\),
\begin{equation}
    \mathbb E_S\left[
    \prod_{a=1}^m \eta_{x_a}(t)
    \right]
    =
    \det
    \left[
    K_t^S(x_a,x_b)
    \right]_{a,b=1}^m .
    \label{eq:determinantal_correlations}
\end{equation}
Thus, at each fixed time, the SDEP started from a deterministic configuration is
described by a determinantal kernel, although this kernel is in general
non-Hermitian because it comes from a mixed matrix element rather than from an
ordinary quantum expectation.

Finally, using the free-fermion evolution,
\[
    e^{tH^{XX}_{L,N}}c_x^\dagger e^{-tH^{XX}_{L,N}}
    =
    \sum_{u\in\mathbb T_L}p_t^{(L,N)}(x-u)c_u^\dagger,
\]
and
\[
    e^{tH^{XX}_{L,N}}c_y e^{-tH^{XX}_{L,N}}
    =
    \sum_{v\in\mathbb T_L}p_{-t}^{(L,N)}(v-y)c_v,
\]
the kernel becomes
\begin{equation}
    K_t^S(x,y)
    =
    \sum_{u,v\in\mathbb T_L}
    p_t^{(L,N)}(x-u)
    p_{-t}^{(L,N)}(v-y)
    K^S_0(u,v),
    \label{eq:kernel_evolution}
\end{equation}
where
\begin{equation}
    K^S_0(u,v)
    =
    \frac{
    \langle \mathrm{GS}_N|c_u^\dagger c_v|S\rangle
    }{
    \langle \mathrm{GS}_N|S\rangle
    }.
    \label{eq:initial_mixed_kernel}
\end{equation}
The next section evaluates \(K^S_0(u,v)\) explicitly for deterministic
configurations.

\section{Correlations with deterministic configurations}
\label{sec:mixed_correlations}

It remains to evaluate the initial kernel
\[
    K^S_0(x,y)
    =
    \frac{\langle \mathrm{GS}_N|c_x^\dagger c_y|S\rangle}
    {\langle \mathrm{GS}_N|S\rangle}
\]
for deterministic configurations \(S\). This is the only model-specific input
needed in the kernel evolution formula \eqref{eq:kernel_evolution}. We first
derive the exact finite-volume expression on the torus and then take its
infinite-volume limit at fixed particle number.

\subsection{Finite-volume mixed kernel}

Let \(S\subset\mathbb T_L\) be a deterministic set of \(N\) occupied sites.
The initial mixed kernel is given by a trigonometric Lagrange interpolation
formula.

\begin{proposition}
\label{prop:mixed_correlation_deterministic}
For \(x,y\in\mathbb T_L\), one has
\begin{equation}
    K^S_0(x,y)
    =
    \begin{cases}
    \displaystyle
    \prod_{z\in S\setminus\{y\}}
    \frac{
        \sin\left(\frac{\pi}{L}(x-z)\right)
    }{
        \sin\left(\frac{\pi}{L}(y-z)\right)
    },
    & y\in S, \\[1.4em]
    0,
    & y\notin S .
    \end{cases}
    \label{eq:mixed_correlation_formula}
\end{equation}
\end{proposition}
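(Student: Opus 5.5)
The plan is to compute the matrix element \(\langle \mathrm{GS}_N|c_x^\dagger c_y|S\rangle\) directly in the occupation basis.

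\textbf{The case \(y\notin S\).} Then \(c_y|S\rangle=0\), so \(K^S_0(x,y)=0\) immediately.

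\textbf{The case \(y\in S\).} Now \(c_y|S\rangle=\pm|S\setminus\{y\}\rangle\). Applying \(c_x^\dagger\) gives zero if \(x\in S\setminus\{y\}\). Otherwise it gives \(\pm|S'\rangle\), where \(S'=(S\setminus\{y\})\cup\{x\}\). The sign comes from reordering the fermionic operators into increasing site order.

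The two overlaps are then read off from \eqref{eq:gs_wavefunction_phi}:
\[
\langle\mathrm{GS}_N|S\rangle=Z_{L,N}^{-1/2}\Phi_N(S),\qquad
\langle\mathrm{GS}_N|S'\rangle=Z_{L,N}^{-1/2}\Phi_N(S').
\]
Their ratio is a ratio of products of sines. Every factor not involving \(y\) or \(x\) cancels. What remains is
\[
\prod_{z\in S\setminus\{y\}}\frac{|\sin(\pi(x-z)/L)|}{|\sin(\pi(y-z)/L)|},
\]
up to sign. When \(x\in S\setminus\{y\}\), the formula \eqref{eq:mixed_correlation_formula} contains the vanishing factor \(z=x\), which matches the zero found above. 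When \(x=y\), the product is trivially \(1\), which matches \(c_y^\dagger c_y|S\rangle=|S\rangle\).

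\textbf{The sign.} This is the main obstacle. The cleanest route is to avoid sign bookkeeping by working with the antisymmetric extension of the wavefunction. Write the Vandermonde amplitude as
\[
\psi(x_1,\dots,x_N)=\det\bigl(e^{ik_ax_b}\bigr)\propto \prod_{i<j}\sin\bigl(\pi(x_j-x_i)/L\bigr)\,e^{i\phi},
\]
where the phase \(e^{i\phi}\) is symmetric in the \(x_j\). This holds because the filled momenta are symmetric about \(0\) and chosen with \(\theta_N\), so the phase is constant in the \(N\)-particle sector.

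Then \(\langle\mathrm{GS}_N|c_x^\dagger c_y|S\rangle\) equals \(\overline{\psi}\) evaluated at the ordered tuple of \(S\) with \(y\) replaced by \(x\) in the same slot. This is the standard first-quantized identity: \(c_x^\dagger c_y\) replaces a particle in place, and the antisymmetric wavefunction absorbs the reordering sign. The ratio with \(\overline\psi(S)\) is then the signed expression: the factors pairing the slot with each \(z\) give \(\sin(\pi(x-z)/L)/\sin(\pi(y-z)/L)\), with orientations matching slot by slot.

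One must also confirm that \(\prod_{i<j}\sin(\pi(x_j-x_i)/L)\) is genuinely antisymmetric and well defined under the shifts \(x\mapsto x+L\) allowed by the boundary condition. Under such a shift each of the \(N-1\) factors involving the shifted coordinate changes sign. The resulting factor \((-1)^{N-1}\) matches the periodic/antiperiodic rule \(\theta_N\), which is exactly why the torus formula needs no extra sign. I expect this consistency check, together with the Jordan–Wigner boundary twist, to be the main technical point. Everything else is cancellation.
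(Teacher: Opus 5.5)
Your argument is correct, but it reaches the formula by a different route from the paper.

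\textbf{The paper's route.} It works in momentum space. It derives the linear system $\sum_a V_{\alpha a}u_a(y)=\sqrt L\,\delta_{\alpha\beta}$ for $V_{\alpha a}=e^{-ik_a s_\alpha}$. It then identifies the solution through the trigonometric Lagrange function $\mathcal L_y$, which has Fourier support in the Fermi sea and satisfies $\mathcal L_y(s_\alpha)=\delta_{\alpha\beta}$, so its Fourier coefficients give $V^{-1}$.

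\textbf{Your route.} You compute the ratio of two position-space amplitudes directly. The two routes are linked by Cramer's rule: replacing the column of the amplitude matrix belonging to $y$ by the orbital values at $x$ gives exactly the inverse-matrix expression the paper solves for.

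Your essential input is that, for the centred Fermi sea $k_a=\frac{2\pi}{L}(a-\frac{N+1}{2})$,
\[
\det(e^{ik_a x_b})=(2i)^{N(N-1)/2}\prod_{b<c}\sin(\pi(x_c-x_b)/L)
\]
identically in the $x_b$. The phase is constant, not merely symmetric, and constancy is what the cancellation requires. The in-place replacement identity then settles the fermionic sign, since the two reordering signs $(-1)^{\beta-1}$ cancel. The cases $x=y$ and $x\in S\setminus\{y\}$ come for free.

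\textbf{What each buys.} Your route gives a short, fully explicit computation with no momentum-space bookkeeping. The paper's route gives the structural statement that $x\mapsto K_0^S(x,y)$ is the unique function in the span of the occupied orbitals interpolating $\delta_{\cdot,y}$ on $S$. That makes the interpolation picture manifest, and it persists for a general Slater left state, where no product formula exists.

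\textbf{The boundary check.} Your closing check under $x\mapsto x+L$ is correct but not needed. With all sites represented in $\{1,\dots,L\}$, both sides are computed directly. The factor $(-1)^{N-1}$ only confirms compatibility with the twisted boundary condition when other representatives are used.
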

Notice that the right-hand side is the trigonometric Lagrange interpolation
polynomial associated with the set \(S\).  In particular, for fixed \(y\in S\),
the function
\[
    x\longmapsto K^S_0(x,y)
\]
equals \(1\) at \(x=y\) and vanishes at all other points of \(S\).  Thus
\[
    K^S_0(x,y)=\delta_{x,y},
    \qquad x,y\in S.
\]
Away from the initially occupied sites, however, \(K^S_0(x,y)\) is nonlocal.  This
nonlocal interpolation structure is the source of the explicit formulas below.

\begin{proof}
Let the occupied sites of \(S\) be \(s_1,\ldots,s_N\).  The XX ground state is a
Slater determinant built from the \(N\) lowest momenta.  We write these momenta
as
\[
    k_a
    =
    \frac{2\pi}{L}
    \left(
    a-\frac{N+1}{2}
    \right),
    \qquad a=1,\ldots,N,
\]
up to the harmless choice of boundary-condition convention.  Define the
\(N\times N\) matrix
\[
    V_{\alpha a}=e^{-ik_a s_\alpha},
    \qquad
    \alpha,a=1,\ldots,N.
\]
The overlap \(\langle \mathrm{GS}_N|S\rangle\) is proportional to
\(\det V\).

For \(y\in S\), say \(y=s_\beta\), we first compute the vector
\[
    u_a(y)
    :=
    \frac{
    \langle \mathrm{GS}_N|c^\dagger(k_a)c_y|S\rangle
    }{
    \langle \mathrm{GS}_N|S\rangle
    }.
\]
Using
\[
    c_{s_\alpha}^\dagger
    =
    \frac{1}{\sqrt L}\sum_{a=1}^N e^{-ik_a s_\alpha}c^\dagger(k_a)
    +\text{modes outside the filled Fermi sea},
\]
and using the canonical anticommutation relations inside the matrix element, we
obtain
\[
    \sum_{a=1}^N V_{\alpha a}u_a(y)
    =
    \sqrt L\,\delta_{\alpha,\beta}.
\]
Thus \(u_a(y)\) is determined by the inverse of the Vandermonde-type matrix
\(V\).

It remains to write this inverse explicitly.  Introduce the trigonometric
Lagrange function
\[
    \mathcal L_y(x)
    =
    \prod_{z\in S\setminus\{y\}}
    \frac{
    \sin\left(\frac{\pi}{L}(x-z)\right)
    }{
    \sin\left(\frac{\pi}{L}(y-z)\right)
    }.
\]
It satisfies
\[
    \mathcal L_y(z)=\delta_{y,z},
    \qquad z\in S.
\]
Moreover, as a function of \(e^{2\pi i x/L}\), it is a Laurent polynomial with
Fourier modes contained in the Fermi sea of the \(N\)-particle ground state.
Therefore its Fourier coefficients are precisely the entries of the inverse of
\(V\).  Consequently,
\[
    \frac{
    \langle \mathrm{GS}_N|c_x^\dagger c_y|S\rangle
    }{
    \langle \mathrm{GS}_N|S\rangle
    }
    =
    \mathcal L_y(x),
    \qquad y\in S.
\]
If \(y\notin S\), then \(c_y|S\rangle=0\), and the matrix element vanishes.  This
proves \eqref{eq:mixed_correlation_formula}.
\end{proof}

\subsection{Infinite-volume limit}

We shall often use Proposition~\ref{prop:mixed_correlation_deterministic} in
the regime where \(L\to\infty\) while the number of particles \(N\) remains
fixed.  In this limit, the torus becomes the infinite lattice \(\mathbb Z\), and
\[
    \sin\left(\frac{\pi}{L}(x-z)\right)
    \sim
    \frac{\pi}{L}(x-z).
\]
Therefore, for a deterministic set \(S\subset\mathbb Z\) with \(|S|=N\), the
mixed correlation becomes
\begin{equation}
    K_0^S(x,y)
    =
    \begin{cases}
    \displaystyle
    \prod_{z\in S\setminus\{y\}}
    \frac{x-z}{y-z},
    & y\in S, \\[1.2em]
    0,
    & y\notin S .
    \end{cases}
    \label{eq:mixed_correlation_infinite}
\end{equation}
This is the ordinary Lagrange interpolation formula on the lattice.
In particular, for the block initial condition
\begin{equation}
    S_N=\{1,2,\ldots,N\},
    \label{eq:block_initial_condition}
\end{equation}
one obtains
\begin{equation}
    K_{0}^{S_N}(x,y)
    =
    \begin{cases}
    \displaystyle
    \prod_{\substack{j=1\\ j\neq y}}^N
    \frac{x-j}{y-j},
    & y\in\{1,\ldots,N\}, \\[1.2em]
    0,
    & y\notin\{1,\ldots,N\}.
    \end{cases}
    \label{eq:block_mixed_correlation}
\end{equation}

Formula \eqref{eq:block_mixed_correlation} will be used in
Section~\ref{sec:block_melting} to specialize the general density evolution
to a compact block and to derive a genuinely finite representation of the
resulting melting profile.

\section{Exact density evolution}
\label{sec:exact_density_evolution}
By Proposition~\ref{prop:observable_dictionary}, the density is the diagonal
part of the time-dependent kernel:
\[
    \rho_S(x,t)=K_t^S(x,x).
\]
Using the free-fermion evolution formula \eqref{eq:kernel_evolution}, we obtain
the following exact expression.

\subsection{The finite-volume formula}
Using the free-fermion kernel \(p_t^{(L,N)}\) defined in
\eqref{eq:finite_volume_kernel}, yields the following result for evolution of the density profile in finite volume.
\begin{theorem}
\label{thm:exact_density_formula}
Let \(S\subset\mathbb T_L\) be a deterministic \(N\)-particle configuration.
Then the SDEP density at time \(t\) is
\begin{equation}
    \rho_S(x,t)
    =
    \sum_{u,v\in\mathbb T_L}
    p_t^{(L,N)}(x-u)\,
    p_{-t}^{(L,N)}(v-x)\,
    K^S_0(u,v),
    \label{eq:exact_density_formula_finite}
\end{equation}
where
\[
    K^S_0(u,v)
    =
    \frac{
    \langle \mathrm{GS}_N|c_u^\dagger c_v|S\rangle
    }{
    \langle \mathrm{GS}_N|S\rangle
    }
\]
is the mixed correlation from
Proposition~\ref{prop:mixed_correlation_deterministic}.  Equivalently,
\begin{equation}
    \rho_S(x,t)
    =
    \sum_{v\in S}
    \sum_{u\in\mathbb T_L}
    p_t^{(L,N)}(x-u)\,
    p_{-t}^{(L,N)}(v-x)\,
    \prod_{z\in S\setminus\{v\}}
    \frac{
    \sin\left(\frac{\pi}{L}(u-z)\right)
    }{
    \sin\left(\frac{\pi}{L}(v-z)\right)
    }.
    \label{eq:exact_density_formula_finite_explicit}
\end{equation}
\end{theorem}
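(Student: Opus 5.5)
The plan is to combine three earlier ingredients: the density specialisation \eqref{eq:density_dictionary} of Proposition~\ref{prop:observable_dictionary}, the kernel evolution formula \eqref{eq:kernel_evolution}, and the explicit initial kernel of Proposition~\ref{prop:mixed_correlation_deterministic}. Since $\rho_S(x,t)=K_t^S(x,x)$, the first formula \eqref{eq:exact_density_formula_finite} is just \eqref{eq:kernel_evolution} evaluated at $y=x$. The substance of the proof is therefore to justify the Heisenberg evolution of $c_x^\dagger$ and $c_y$ used in deriving \eqref{eq:kernel_evolution}, and then to insert the closed-form initial kernel.

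\textbf{Step 1: Heisenberg evolution.} I will work in the $N$-particle sector with the appropriate boundary twist $\theta_N$. I expand
\[
c_x=L^{-1/2}\sum_{k\in\mathcal K_{L,N}}e^{ikx}c(k).
\]
Since $H^{XX}_{L,N}=\sum_k\varepsilon(k)c^\dagger(k)c(k)$, one has
\[
e^{tH}c^\dagger(k)e^{-tH}=e^{t\varepsilon(k)}c^\dagger(k),
\qquad
e^{tH}c(k)e^{-tH}=e^{-t\varepsilon(k)}c(k).
\]
Transforming back to position space gives
\[
e^{tH}c_x^\dagger e^{-tH}=\sum_u p_t^{(L,N)}(x-u)\,c_u^\dagger,
\qquad
e^{tH}c_y e^{-tH}=\sum_v p_{-t}^{(L,N)}(v-y)\,c_v,
\]
with $p_t^{(L,N)}$ as in \eqref{eq:finite_volume_kernel}.

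\textbf{Step 2: kernel formula and the sign conventions.} I insert $c_x^\dagger c_x=(e^{tH}c_x^\dagger e^{-tH})(e^{tH}c_xe^{-tH})$ into \eqref{eq:density_dictionary}. By linearity and the definition \eqref{eq:initial_mixed_kernel}, this gives \eqref{eq:exact_density_formula_finite}. Only a finite sum over $u,v\in\mathbb T_L$ is involved, so there are no convergence issues.

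The one point needing care is that the antiperiodic case ($N$ even) makes the mode functions $e^{ikx}$ not $L$-periodic in $x$. I will fix representatives $x\in\{1,\dots,L\}$ throughout and check that the phases $e^{\pm ik(\cdot)}$ combine consistently. This should work because $c_u^\dagger c_v$ is a bilinear whose twist phases cancel. I will also check that the same representative convention is the one used in Proposition~\ref{prop:mixed_correlation_deterministic}, where the sine products likewise change sign under $u\mapsto u+L$ exactly when $N-1$ is odd.

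\textbf{Step 3: explicit form.} Proposition~\ref{prop:mixed_correlation_deterministic} gives $K_0^S(u,v)=0$ for $v\notin S$, so the sum over $v$ restricts to $v\in S$. For $v\in S$, substituting the trigonometric Lagrange product for $K_0^S(u,v)$ yields \eqref{eq:exact_density_formula_finite_explicit}.

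I expect the main obstacle to be the boundary-condition bookkeeping in Step 2, that is, confirming that the antiperiodic sign conventions in $p_t^{(L,N)}$ and in the sine-product interpolant are compatible on the torus. The remaining steps are direct substitutions.
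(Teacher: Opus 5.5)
Your proposal is correct and follows the paper's own route: the paper obtains the theorem by setting $y=x$ in the kernel evolution formula \eqref{eq:kernel_evolution}, which comes from the same Heisenberg evolution of $c_x^\dagger$ and $c_y$ that you reconstruct in Step~1. It then inserts Proposition~\ref{prop:mixed_correlation_deterministic}, which removes the terms with $v\notin S$ and supplies the sine-product interpolant. Your extra check that the even-$N$ antiperiodic twist is compatible between $p_t^{(L,N)}$ and that interpolant (both acquire the sign $(-1)^{N-1}$ under $u\mapsto u+L$) is left implicit in the paper, and it works as you describe.
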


Several elementary checks are immediate.  At \(t=0\),
\[
    p_0^{(L,N)}(x)=\delta_{x,0},
\]
and hence
\[
    \rho_S(x,0)=K_0^S(x,x)=\mathbf 1_{\{x\in S\}}.
\]
Moreover, the total mass is conserved.  Indeed,
\[
    \sum_{x\in\mathbb T_L}
    p_t^{(L,N)}(x-u)p_{-t}^{(L,N)}(v-x)
    =
    \delta_{u,v},
\]
and therefore
\[
    \sum_{x\in\mathbb T_L}\rho_S(x,t)
    =
    \sum_{u\in\mathbb T_L}K_0^S(u,u)
    =
    N.
\]

\subsection{Infinite-volume formula}

We shall also need the infinite-volume version of
Theorem~\ref{thm:exact_density_formula}.  Let \(S\subset\mathbb Z\) be a finite
set with \(|S|=N\).  In the limit \(L\to\infty\), the kernel
\eqref{eq:finite_volume_kernel} becomes
\begin{equation}
    p_t(x)
    =
    \int_{-\pi}^{\pi}
    \frac{dk}{2\pi}
    e^{-ikx}e^{-2wt\cos k}
    =
    I_x(-2wt),
    \label{eq:infinite_kernel_bessel}
\end{equation}
where \(I_x\) is the modified Bessel function of the first kind.  Thus
\[
    p_{-t}(x)=I_x(2wt).
\]

The density on \(\mathbb Z\) is therefore
\begin{equation}
    \rho_S(x,t)
    =
    \sum_{u\in\mathbb Z}
    \sum_{v\in S}
    I_{x-u}(-2wt)\,
    I_{v-x}(2wt)\,
    \prod_{z\in S\setminus\{v\}}
    \frac{u-z}{v-z}.
    \label{eq:exact_density_formula_infinite}
\end{equation}
Since \(I_{v-x}=I_{x-v}\), this may also be written as
\begin{equation}
    \rho_S(x,t)
    =
    \sum_{u\in\mathbb Z}
    \sum_{v\in S}
    I_{x-u}(-2wt)\,
    I_{x-v}(2wt)\,
    \prod_{z\in S\setminus\{v\}}
    \frac{u-z}{v-z}.
    \label{eq:exact_density_formula_infinite_symmetric}
\end{equation}

Equations \eqref{eq:exact_density_formula_infinite} and
\eqref{eq:exact_density_formula_infinite_symmetric} are exact for every finite
time \(t\) and every finite deterministic set \(S\).  They are the
infinite-volume analogues of the finite-torus formula
\eqref{eq:exact_density_formula_finite_explicit}.
\section{Moment identities and block melting}
\label{sec:block_melting}

We now investigate the evolution of the spatial moments of the density profile
for the infinite-volume SDEP. 

Let
\[
    X_1(t)<\cdots<X_N(t)
\]
denote the particle positions at physical time \(t\) for the process started
from the deterministic configuration \(S\), and let \(\mathbb E_S\) denote the
corresponding expectation.  For every nonnegative integer \(n\), define the
\(n\)-th moment of the density profile by
\begin{equation}
    \mu_n^S(t)
    :=
    \sum_{x\in\mathbb Z}x^n\rho_S(x,t).
    \label{eq:density_moment_definition}
\end{equation}
Equivalently, 
\begin{equation}
    \mu_n^S(t)
    =
    \mathbb E_S\left[
        \sum_{i=1}^N X_i(t)^n
    \right].
    \label{eq:density_moment_probabilistic_representation}
\end{equation}
The zeroth moment satisfies
$$
\mu_0^S(t)=N
$$
and expresses conservation of the total number of particles.  The first
moment determines the mean position of the particle cloud, whereas the second
and higher moments describe its spreading and provide increasingly detailed
information about the shape of the density profile.

Although the exact density formula obtained in the previous section is
explicit, computing \eqref{eq:density_moment_definition} directly leads to
infinite sums involving modified Bessel functions.  One of the main purposes
of this section is to show that the complete moment hierarchy admits a much
simpler finite-dimensional algebraic representation.  More precisely, the
time dependence of the moments will be encoded by an operator substitution
$$
X\longmapsto X+sD
$$
acting on the space of polynomials of degree strictly smaller than $N$, where $X$ and $D$ will be defined below. This
representation separates the evolution dynamics from the dependence on the
initial configuration and implies, in particular, that every moment is a
polynomial in time of bounded degree.

We first establish this representation for an arbitrary deterministic initial
set $S$ and derive the resulting identities for the first two moments.  We
then study the highest-time coefficient of the even moments and show that it
is independent of the detailed geometry of the initial configuration.  Its
large-$N$ behavior is governed by the Catalan numbers. Finally, we specialize the general results to the compact block \(S_N\)
defined in \eqref{eq:block_initial_condition}, for which we obtain a finite
representation of the exact melting profile and explicit formulas for the
first few centered even moments.

It is convenient throughout this section to introduce the dimensionless time
\begin{equation}
s=2wt.
\label{eq:dimensionless_time}
\end{equation}
We write $\rho_S(x,s)$ and $\mu_n^S(s)$ for the density and its moments at the
physical time $t=s/(2w)$.

The infinite-volume jump rates are
\begin{equation}
c_i^+(x_1,\ldots,x_N)
=
w\prod_{j\neq i}
\frac{x_i+1-x_j}{x_i-x_j},
\qquad
c_i^-(x_1,\ldots,x_N)
=
w\prod_{j\neq i}
\frac{x_i-1-x_j}{x_i-x_j}.
\label{eq:infinite_sdep_rates}
\end{equation}
A jump onto an occupied site has rate zero, as is already encoded in
\eqref{eq:infinite_sdep_rates}.  Indeed, if $x_i+1=x_j$ or $x_i-1=x_j$ for
some $j\neq i$, then the corresponding product contains a vanishing factor.

\subsection{Polynomial moment algebra}
\label{sec:polynomial_moment_algebra}

Let
\[
    S=\{a_1,\ldots,a_N\}\subset\mathbb Z
\]
be an arbitrary deterministic initial configuration, and let
\[
    \mathcal V_N=\mathbb C_{<N}[x]
\]
be the space of polynomials of degree strictly smaller than \(N\).  For
\(a\in S\), define the Lagrange basis polynomial
\begin{equation}
    \ell_a(x)
    =
    \prod_{\substack{b\in S\\b\neq a}}
    \frac{x-b}{a-b},
    \label{eq:general_lagrange_basis}
\end{equation}
and the interpolation map
\begin{equation}
    (P_Sf)(x)
    =
    \sum_{a\in S}f(a)\ell_a(x).
    \label{eq:general_interpolation_projector}
\end{equation}
$P_S$ is a projector on $\mathcal V_N$, i.e. 
\(P_Sf\in\mathcal V_N\), it acts as the identity on
\(\mathcal V_N\).

Introduce the shift operator
\[
    (Tf)(x)=f(x+1),
\]
and the operators
\begin{equation}
    B=\frac{T+T^{-1}}{2},
    \qquad
    D=\frac{T-T^{-1}}{2},
    \qquad
    (Xf)(x)=xf(x).
    \label{eq:BDX_operators}
\end{equation}
They satisfy
\begin{equation}
    [B,X]=D,
    \qquad
    [D,X]=B,
    \qquad
    [B,D]=0,
    \qquad
    B^2-D^2=1.
\label{eq:BDX_algebra}
\end{equation}
These relations are standard in finite operator calculus.  In this
terminology, \(D\) is the central-difference delta operator and
\[
    B=[D,X]
\]
is its Pincherle derivative
\cite{RotaKahanerOdlyzko1973,DimakisMullerHoissenStriker1996}.
Equivalently, the operators \(X,B,D\) form a realization of the complexified
\(1+1\)-dimensional Poincaré algebra, with quadratic Casimir
\(B^2-D^2=1\).  The new ingredient below is their finite-dimensional
compression by the interpolation projector \(P_S\), which encodes the
deterministic initial configuration.
The commutation relations above imply the following exact conjugation identity,
which is the algebraic origin of the substitution $X\mapsto X+sD$.
\begin{lemma}
\label{lem:X_conjugation}
For every $s\in\mathbb R$,
\begin{equation}
e^{sB}Xe^{-sB}=X+sD.
\label{eq:X_substitution}
\end{equation}
\end{lemma}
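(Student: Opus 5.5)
The plan is to prove \eqref{eq:X_substitution} by the standard ODE/derivative argument for conjugations, using only the commutation relations \eqref{eq:BDX_algebra}. Define the operator-valued function
\[
    F(s)=e^{sB}Xe^{-sB}.
\]
This is well defined and entire in $s$ as a map on polynomials, and also on functions $\mathbb Z\to\mathbb C$. Indeed $B=(T+T^{-1})/2$, and $e^{sB}$ acts by convolution with the Bessel weights $I_m(s)$, which decay superexponentially in $|m|$. Acting on polynomials, these series converge absolutely.

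Differentiating gives
\[
    F'(s)=e^{sB}[B,X]e^{-sB}=e^{sB}De^{-sB}.
\]
Since $[B,D]=0$, $D$ commutes with $e^{\pm sB}$, so $F'(s)=D$ for all $s$. With $F(0)=X$, integration gives $F(s)=X+sD$.

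I will first verify the commutators directly. The relations $TX=(X+1)T$ and $T^{-1}X=(X-1)T^{-1}$ give
\[
    [T,X]=T,\qquad [T^{-1},X]=-T^{-1},
\]
hence $[B,X]=D$. The relation $[B,D]=0$ is clear because both are polynomials in $T$.

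Alternatively, to avoid differentiating operator families, I will use the Hadamard expansion
\[
    e^{sB}Xe^{-sB}=\sum_{n\ge0}\frac{s^n}{n!}\,\mathrm{ad}_B^n X.
\]
Here $\mathrm{ad}_BX=D$ and $\mathrm{ad}_B^2X=[B,D]=0$, so the series terminates after two terms.

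The main technical point is justifying the manipulations with the unbounded operator $X$, namely interchanging the derivative with $X$ and the convergence of $e^{sB}$. I would settle this by applying both sides to a concrete function $f$ and computing pointwise. Using $e^{sB}g(x)=\sum_m I_m(s)g(x+m)$, one gets
\[
    (e^{sB}Xe^{-sB}f)(x)=\sum_{m,m'}I_m(s)I_{m'}(-s)(x+m)f(x+m+m').
\]
Regrouping by $n=m+m'$, the factor $x$ multiplies $\sum_m I_m(s)I_{n-m}(-s)=\delta_{n,0}$, which is the semigroup property $e^{sB}e^{-sB}=1$. The remaining term is
\[
    \sum_m mI_m(s)I_{n-m}(-s),
\]
which the generating function $e^{\frac s2(z+1/z)}$ shows equals $\frac s2(\delta_{n,1}-\delta_{n,-1})$. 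This is exactly $sD$. Absolute convergence for polynomially bounded $f$ follows from the Bessel decay, so the rearrangements are legitimate.
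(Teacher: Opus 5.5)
Your proof is the paper's own: both define $F(s)=e^{sB}Xe^{-sB}$, use $[B,X]=D$ and $[B,D]=0$ to get $F'(s)=e^{sB}De^{-sB}=D$, and integrate from $F(0)=X$. Your extras are all correct and supply the justification for the unbounded $X$ that the paper leaves implicit: the direct check $[T^{\pm1},X]=\pm T^{\pm1}$, the Hadamard series terminating at $\mathrm{ad}_B^2X=0$, and the Bessel-kernel computation giving $\sum_m mI_m(s)I_{n-m}(-s)=\tfrac s2(\delta_{n,1}-\delta_{n,-1})$; the only slip is saying $e^{sB}$ is defined on all functions $\mathbb Z\to\mathbb C$, whereas it needs a growth condition (e.g.\ at most exponential growth), which is harmless since the lemma is only applied to polynomials.
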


\begin{proof}
Define
$$
F(s)=e^{sB}Xe^{-sB}.
$$
Differentiating with respect to $s$ gives
$$
F'(s)
=
e^{sB}[B,X]e^{-sB}
=
e^{sB}De^{-sB}.
$$
Since $[B,D]=0$, the operator $D$ commutes with $e^{sB}$, and hence
$$
F'(s)=D.
$$
Together with the initial condition
$$
F(0)=X,
$$
this yields
$$
F(s)=X+sD,
$$
which proves \eqref{eq:X_substitution}.
\end{proof}
Let $e^{sB}(x,y)$ denote the kernel of $e^{sB}$ with respect to the canonical
basis of delta functions ${\delta_y}_{y\in\mathbb Z}$, namely
$$
e^{sB}(x,y)
=
\bigl(e^{sB}\delta_y\bigr)(x).
$$
This kernel can be written in terms of the modified Bessel function as follows:

\begin{lemma}
\label{lem:B_exponential_kernel}
For every $s\in\mathbb R$, the integral kernel of $e^{-sB}$ is
$$
\left(e^{-sB}\right)(x,u)
=
I_{x-u}(-s),
$$
where $I_n$ denotes the modified Bessel function of the first kind.
Equivalently, for every function $f$ for which the sum is well defined,
\begin{equation}
\left(e^{-sB}f\right)(x)
=
\sum_{u\in\mathbb Z}
I_{x-u}(-s)f(u).
\label{eq:B_exponential_kernel}
\end{equation}
\end{lemma}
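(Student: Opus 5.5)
We will prove the lemma with the generating function of the modified Bessel functions,
\[
    e^{\frac{z}{2}(\tau+\tau^{-1})}=\sum_{n\in\mathbb Z} I_n(z)\,\tau^n ,
    \qquad \tau\neq 0 .
\]
The plan is to substitute the shift operator $T$ for $\tau$ and $z=-s$.

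First, $T$ and $T^{-1}$ commute. Hence
\[
    e^{-sB}=e^{-\frac{s}{2}T}\,e^{-\frac{s}{2}T^{-1}} .
\]
We expand each exponential as a power series in $T$ and $T^{-1}$ respectively. Collecting the coefficient of $T^n$ gives
\[
    e^{-sB}=\sum_{n\in\mathbb Z} c_n(s)\,T^n,
    \qquad
    c_n(s)=\sum_{k\ge \max(0,-n)}\frac{(-s/2)^{n+k}(-s/2)^{k}}{(n+k)!\,k!}.
\]
This Cauchy product is exactly the series defining $I_n(-s)=\sum_k (-s/2)^{2k+n}/(k!(n+k)!)$, with the usual convention $I_{-n}=I_n$ for integer $n$. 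So $c_n(s)=I_n(-s)$. One could alternatively quote the integral representation \eqref{eq:infinite_kernel_bessel}. In that route, $B$ acts on the Fourier mode $e^{-iku}$ by multiplication by $\cos k$, and the kernel is read off directly.

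Next we compute the kernel. Applying $T^n$ to a function gives $(T^nf)(x)=f(x+n)$. Therefore
\[
    (e^{-sB}f)(x)=\sum_{n}I_n(-s)f(x+n).
\]
Setting $u=x+n$ gives $\sum_u I_{u-x}(-s)f(u)$. By the symmetry $I_{-m}=I_m$ for integer order, this equals $\sum_u I_{x-u}(-s)f(u)$, which is \eqref{eq:B_exponential_kernel}. Taking $f=\delta_u$ yields the stated kernel.

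The main obstacle is justifying the operator-valued manipulations: splitting the exponential, rearranging the double series, and interchanging sums when $f$ is unbounded. Note that the lemma will be applied to polynomials $f$. We handle this with the bound $|I_n(s)|\le (|s|/2)^{|n|}e^{|s|}/|n|!$. Because of this factorial decay, the series $\sum_n I_n(-s)f(x+n)$ converges absolutely for every $f$ of at most exponential (in particular polynomial) growth. On that class, $T^{\pm1}$ and $B$ act as operators for which absolute convergence permits all rearrangements. Then $e^{-sB}$ is well defined there as $\sum_m (-s)^mB^m/m!$, and the two sides agree term by term.

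Finally, to make the identification airtight without operator exponentials, we can check the ODE. Define $g(s,x)=\sum_u I_{x-u}(-s)f(u)$. The recurrence $2I_n'(z)=I_{n-1}(z)+I_{n+1}(z)$ gives $\partial_s g=-Bg$, and $I_n(0)=\delta_{n,0}$ gives $g(0,\cdot)=f$. For polynomial $f$, the space $\mathcal V_N$ is finite-dimensional and $B$-invariant, so this ODE has a unique solution, namely $e^{-sB}f$.
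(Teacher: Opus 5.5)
Your proof is correct, but it takes a different route from the paper. The paper diagonalizes $B$ by the Fourier transform, $\widehat{Bf}(k)=\cos k\,\widehat f(k)$. Then $e^{-sB}$ becomes multiplication by $e^{-s\cos k}$, and the kernel $I_{x-u}(-s)$ is read off from the integral representation $I_n(z)=\int_{-\pi}^{\pi}\frac{dk}{2\pi}e^{ikn}e^{z\cos k}$. That is the one-line route you mention only in passing.

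You instead expand $e^{-sB}=e^{-\frac{s}{2}T}e^{-\frac{s}{2}T^{-1}}=\sum_n I_n(-s)T^n$ by a Cauchy product, which amounts to the Bessel generating function. This is purely algebraic. Combined with your bound $|I_n(s)|\le (|s|/2)^{|n|}e^{|s|}/|n|!$, it gives the identity for every $f$ of at most exponential growth, with $e^{-sB}$ meaning the absolutely convergent power series in $B$.

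Your route gains something the paper's does not. The Fourier argument is literally valid only for $f\in\ell^1$ or $\ell^2$, for instance $f=\delta_u$, which suffices for the kernel. However, the lemma is actually applied in Theorem~\ref{prop:moment_operator_representation} and in the truncation remark to the polynomials $\ell_a$. These have no ordinary Fourier transform, so there the Fourier step only makes sense distributionally. Your version covers this case directly, and it agrees with the matrix exponential of $B$ on the $B$-invariant polynomial spaces, which is how $e^{-sB}$ is used later. The paper's route buys brevity and a direct link to the propagator \eqref{eq:infinite_kernel_bessel}.

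One small gap is in your final ODE paragraph. To invoke uniqueness in a finite-dimensional space, you must first know that $g(s,\cdot)$ remains a polynomial of degree at most $\deg f$. This follows from $\sum_n I_n(-s)(x+n)^m=\sum_{j}\binom{m}{j}x^{m-j}\sum_n n^jI_n(-s)$, where the moment sums converge by your factorial bound. That paragraph is in any case redundant once the power-series identification is in place.
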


\begin{proof}
Under the Fourier transform
$$
\widehat f(k)
=
\sum_{x\in\mathbb Z}f(x)e^{-ikx},
\qquad
k\in[-\pi,\pi],
$$
the operator $B$ acts by multiplication by $\cos k$:
$$
\widehat{Bf}(k)
=
\cos(k)\widehat f(k).
$$
Therefore,
$$
\widehat{e^{-sB}f}(k)
=
e^{-s\cos k}\widehat f(k).
$$
Taking the inverse Fourier transform gives
$$
\left(e^{-sB}f\right)(x)
=
\sum_{u\in\mathbb Z}
\left[
\int_{-\pi}^{\pi}
\frac{\mathrm dk}{2\pi},
e^{ik(x-u)}e^{-s\cos k}
\right]
f(u).
$$
Using the integral representation
$$
I_n(z)
=
\int_{-\pi}^{\pi}
\frac{\mathrm dk}{2\pi},
e^{ikn}e^{z\cos k},
$$
the quantity in brackets is $I_{x-u}(-s)$.  This proves
\eqref{eq:B_exponential_kernel}.
\end{proof}

\begin{theorem}[polynomial representation of the moments]
\label{prop:moment_operator_representation}
For every deterministic finite set \(S\subset\mathbb Z\), with \(|S|=N\), and
every \(n\geq0\),
\begin{equation}
    \mu_n^S(s)
    =
    \sum_{a\in S}
    \left[(X+sD)^n\ell_a\right](a)
    =  \operatorname{Tr}_{\mathcal V_N}
    \left[
    P_S(X+sD)^n
    \right].
\label{eq:moment_operator_representation}
\end{equation}
Consequently, \(\mu_n^S(s)\) is a polynomial in \(s\) of degree at most
\(\lfloor n/2\rfloor\).
\end{theorem}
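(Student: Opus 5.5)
Starting from the infinite-volume density formula \eqref{eq:exact_density_formula_infinite_symmetric}, I rewrite it in operator language using Lemma~\ref{lem:B_exponential_kernel}. The inner sum over $u$ is
\[
\sum_{u\in\mathbb Z} I_{x-u}(-s)\,\ell_v(u) = (e^{-sB}\ell_v)(x).
\]
This is well defined because $\ell_v$ has polynomial growth and $I_n(\pm s)$ decays superexponentially in $|n|$. Similarly, $I_{x-v}(s)=(e^{sB}\delta_v)(x)$. Hence
\[
\rho_S(x,s)=\sum_{v\in S}(e^{sB}\delta_v)(x)\,(e^{-sB}\ell_v)(x).
\]
Multiplying by $x^n$ and summing over $x$ gives
\[
\mu_n^S(s)=\sum_{v\in S}\sum_x (e^{sB}\delta_v)(x)\,\bigl(X^n e^{-sB}\ell_v\bigr)(x).
\]

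Next I move $e^{sB}$ onto the other factor. The operator $B$ is symmetric with respect to the bilinear pairing $\langle f,g\rangle=\sum_x f(x)g(x)$, since $T^\top=T^{-1}$. The same holds for $e^{sB}$, provided one factor decays fast enough. Here $e^{sB}\delta_v$ decays superexponentially and $X^ne^{-sB}\ell_v$ grows at most polynomially, so the exchange of summations is justified by absolute convergence. This yields
\[
\mu_n^S(s)=\sum_{v\in S}\bigl(e^{sB}X^ne^{-sB}\ell_v\bigr)(v)=\sum_{v\in S}\bigl[(X+sD)^n\ell_v\bigr](v),
\]
where the last step uses Lemma~\ref{lem:X_conjugation}, raised to the $n$-th power. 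One point needs care: $e^{\pm sB}$ acting on polynomials must be interpreted either via the Bessel series or as the finite-order power series in $B$. On polynomials, $B-1$ is nilpotent of order bounded by the degree, which makes $e^{sB}$ well defined. I will check that the two interpretations agree, using $\sum_n I_n(s)(x-n)^k$ and the moment generating function $\sum_n I_n(s)e^{\lambda n}=e^{s\cosh\lambda}$. This is the step I expect to be the main technical obstacle. Once it is settled, the conjugation identity holds on the space of polynomials, because $X$, $B$, $D$ all preserve it.

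For the trace form, I note that $(X+sD)^n$ maps polynomials to polynomials. For a general polynomial $f$, however, the output can have degree above $N-1$, so I should use $P_S$ acting after $(X+sD)^n$, with $P_S$ viewed as a map from all polynomials into $\mathcal V_N$. Because $P_S$ evaluates only at points of $S$, we have $(P_Sg)(x)=\sum_a g(a)\ell_a(x)$. The $\{\ell_a\}$ form a basis of $\mathcal V_N$ whose dual functionals are evaluation at $a$. Therefore
\[
\operatorname{Tr}_{\mathcal V_N}\bigl[P_S(X+sD)^n\bigr]=\sum_a \bigl[(X+sD)^n\ell_a\bigr](a),
\]
which matches the first expression.

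For the degree bound, I expand $(X+sD)^n$ as a sum of words. A word containing $j$ factors of $D$ contributes $s^j$ and lowers the degree by $j$ while raising it by $n-j$. Degree counting alone, however, gives only $j\le n$, so I need a parity or symmetry argument. My plan is to use the reflection $R f(x)=f(-x)$, under which $X\to -X$, $D\to -D$, $B\to B$. This gives only $\mu_n^{-S}(s)=(-1)^n\mu_n^S(s)$, which is not enough. Instead I use the trace and normal ordering. Writing $(X+sD)^n$ in terms of $X$, $D$ and commutator terms $B$ via \eqref{eq:BDX_algebra}, the coefficient of $s^j$ acting on $\ell_a$ and evaluated at $a$ involves $D^j$ acting on polynomials of degree $\le N-1+(n-j)$. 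A cleaner route is the probabilistic bound: $\mu_n^S(s)$ equals $\mathbb E_S\sum_i X_i^n$. The SDEP generator maps $\sum_i x_i^n$ to $w\sum_i[(x_i+1)^n+(x_i-1)^n-2x_i^n]$ times rate corrections, which lowers the total degree by $2$. To make this precise, I verify that $\mathcal L$ maps symmetric polynomials of degree $m$ to symmetric polynomials of degree $\le m-2$ (using the Lagrange identities $\sum_i\prod_{j\ne i}\frac{x_i\pm1-x_j}{x_i-x_j}=N$ and their moment analogues). Then $e^{t\mathcal L}$ applied to $\sum x_i^n$ is a polynomial in $t$ of degree $\le\lfloor n/2\rfloor$.
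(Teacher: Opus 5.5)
Your derivation of the two equalities in \eqref{eq:moment_operator_representation} is the paper's argument: rewrite $\rho_S$ as $\sum_{a\in S}(e^{sB}\delta_a)(x)\,(e^{-sB}\ell_a)(x)$, move $e^{sB}$ across using $I_{-n}=I_n$, apply Lemma~\ref{lem:X_conjugation}, and read off the trace in the Lagrange basis. The care you add is correct and covers points the paper skips: convergence, agreement of the Bessel-series and finite-series meanings of $e^{\pm sB}$ on polynomials, and treating $P_S$ as a map from all polynomials to $\mathcal V_N$.

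For the degree bound, however, your reason for abandoning the direct route is mistaken. Degree counting cannot make a single evaluation $[W\ell_a](a)$ vanish, but it does make the trace you had just established vanish. A word $W$ with $j>n/2$ letters $D$ sends $x^m$ to degree at most $m+n-2j<m$. So for $m<N$ we have $P_SWx^m=Wx^m$, and $W$ is strictly lower triangular on $\mathcal V_N$ in the monomial basis, hence traceless. That is the paper's entire argument.

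Your probabilistic substitute is valid, but its key lemma is only asserted. It must also hold for all symmetric $F$, not just power sums, since $\mathcal L\sum_ix_i^n$ is not a power sum. Here is one way to prove it.
\begin{enumerate}[label=(\roman*)]
\item Write $w^{-1}\mathcal LF(x)=H(1,x)+H(-1,x)$, with $H(a,x)=\sum_i\ell_i(x_i+a)\bigl(F(x+ae_i)-F(x)\bigr)$.
\item For symmetric $F$, $H$ is symmetric in $x$. Multiplying by the Vandermonde clears all denominators and gives an antisymmetric polynomial, so $H$ is a polynomial in $(a,x)$.
\item For $F$ homogeneous of degree $m$, $H$ is jointly homogeneous of degree $m$ and vanishes at $a=0$. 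Hence $H=\sum_{q\ge1}a^qH_q(x)$ with $H_q$ homogeneous of degree $m-q$.
\item Summing over $a=\pm1$ keeps only even $q\ge2$, so $\deg\mathcal LF\le m-2$.
\end{enumerate}
You also need Dynkin's formula for polynomial observables. It is justified because the total jump rate is the constant $2wN$, so displacements are dominated by a Poisson variable. Nilpotency of $\mathcal L$ then gives
\[
    \mathbb E_S\Bigl[\sum_iX_i(t)^n\Bigr]=\sum_{k\le\lfloor n/2\rfloor}\frac{t^k}{k!}\Bigl(\mathcal L^k\sum_ix_i^n\Bigr)(S).
\]

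Comparing the two routes: the paper's costs nothing once the trace formula is in hand, and it keeps the coefficients in the trace form used for Proposition~\ref{prop:catalan}. Yours bypasses the free-fermion machinery and proves more. Every symmetric polynomial observable has an expectation that is polynomial in $t$, of degree at most half its degree, which extends Theorem~\ref{thm:moment_identities}. It also shows that $\mathcal L^r\sum_ix_i^{2r}$ is a constant, which makes the $S$-independence of $\kappa_{r,N}$ immediate.
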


\begin{proof}
The exact infinite-volume density formula
\eqref{eq:exact_density_formula_infinite_symmetric} can be written as
\begin{equation}
    \rho_S(x,s)
    =
    \sum_{a\in S}I_{x-a}(s)
    \left(e^{-sB}\ell_a\right)(x),
    \label{eq:general_density_operator_form}
\end{equation}
because the kernel of \(e^{-sB}\) is \(I_{x-u}(-s)\).  Therefore
\[
\begin{aligned}
    \mu_n^S(s)
    &=
    \sum_{a\in S}\sum_{x\in\mathbb Z}
    I_{a-x}(s)x^n\left(e^{-sB}\ell_a\right)(x)
    \\
    &=
    \sum_{a\in S}
    \left[e^{sB}X^ne^{-sB}\ell_a\right](a).
\end{aligned}
\]
Using \eqref{eq:X_substitution},
\[
    e^{sB}X^ne^{-sB}=(X+sD)^n,
\]
which proves the first equality in
\eqref{eq:moment_operator_representation}.

To obtain the trace formula, use the Lagrange basis
\(\{\ell_a:a\in S\}\) of \(\mathcal V_N\).  For any polynomial \(g\), the
coefficient of \(\ell_a\) in \(P_Sg\) is \(g(a)\).  Hence the trace of
\(P_S(X+sD)^n\) on \(\mathcal V_N\) is precisely the finite sum in the first
part of \eqref{eq:moment_operator_representation}.

Finally, expand \((X+sD)^n\) as a non-commutative polynomial.  The coefficient
of \(s^k\) is a sum of words containing \(k\) letters \(D\) and \(n-k\)
letters \(X\).  On polynomials, \(X\) raises the degree by one, whereas \(D\)
lowers it by at least one.  If \(k>n/2\), every such word strictly lowers
polynomial degree.  Its image already belongs to \(\mathcal V_N\), so \(P_S\)
does not change it, and its matrix in the monomial basis is strictly lower
triangular.  Its trace therefore vanishes.  This proves the degree bound.
\end{proof}
Explicit examples on how to use this theorem to compute the moments will be given in section \ref{sec:explicit_moments_catalan}.

\subsection{Universal first and second moments}

We next derive the first two moment identities directly from the jump rates.
The following interpolation lemma replaces the shifted identities by a single
formula valid for an arbitrary displacement.

\begin{lemma}
\label{lem:shifted_interpolation_sums}
Let \(x_1<\cdots<x_N\), let
\[
    \ell_i(z)=\prod_{j\neq i}\frac{z-x_j}{x_i-x_j},
    \qquad
    e_1=\sum_{i=1}^Nx_i,
\]
and let \(a\neq0\).  Then
\begin{equation}
    \sum_{i=1}^N\ell_i(x_i+a)=N,
    \label{eq:shifted_lagrange_sum_zero}
\end{equation}
and
\begin{equation}
    \sum_{i=1}^Nx_i\ell_i(x_i+a)
    =
    e_1+\binom{N}{2}a.
    \label{eq:shifted_lagrange_sum_one}
\end{equation}
\end{lemma}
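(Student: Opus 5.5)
The plan is to treat both sums as traces of a shift operator on the space \(\mathcal V_N\) of polynomials of degree \(<N\), taken in the Lagrange basis \(\{\ell_i\}\) associated with the nodes \(x_1,\ldots,x_N\). Let \(T_a\) denote translation, \((T_af)(z)=f(z+a)\), which maps \(\mathcal V_N\) to itself. Let \(M\) denote multiplication by \(z\) followed by interpolation, \(Mf=P(zf)\). Here \(P\) is the interpolation projector onto the nodes, as in \eqref{eq:general_interpolation_projector}.

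In the Lagrange basis, the coefficient of \(\ell_i\) in \(g\in\mathcal V_N\) is \(g(x_i)\). Hence the diagonal entry of \(T_a\) is \((T_a\ell_i)(x_i)=\ell_i(x_i+a)\), so
\[
\sum_i\ell_i(x_i+a)=\operatorname{Tr}_{\mathcal V_N}T_a .
\]
In the monomial basis \(1,z,\ldots,z^{N-1}\), the operator \(T_a\) is unitriangular, since \((z+a)^m=z^m+\text{lower order}\). Its trace is therefore \(N\), which proves \eqref{eq:shifted_lagrange_sum_zero}. This is the same triangularity argument used in the proof of Theorem~\ref{prop:moment_operator_representation}.

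For \eqref{eq:shifted_lagrange_sum_one}, I observe that \(M\) is diagonal in the Lagrange basis, since \(M\ell_i=x_i\ell_i\) (because \(z\ell_i\) takes the value \(x_i\delta_{ij}\) at the node \(x_j\)). The left side is then \(\operatorname{Tr}_{\mathcal V_N}(MT_a)\). I compute this trace in the monomial basis. For \(m<N-1\), \(M z^m=z^{m+1}\). For \(m=N-1\), \(Mz^{N-1}=P(z^N)=z^N-\prod_j(z-x_j)\), which equals \(e_1z^{N-1}+\text{lower order}\). So \(M\) is the companion-type operator whose only nonzero entries are the subdiagonal ones (equal to \(1\)) and the last column, with \((N-1,N-1)\) entry \(e_1\).

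Next I apply \(T_a\) to \(z^m\) and collect the diagonal entries of \(MT_a\):
\[
T_az^m=z^m+ma\,z^{m-1}+\cdots .
\]
For \(m\le N-2\), the operator \(M\) sends the \(z^{m-1}\) component to \(z^m\), so the diagonal contribution is \(ma\). The terms \(z^m\mapsto z^{m+1}\) and lower terms contribute nothing on the diagonal. For \(m=N-1\), \(M\) applied to \(z^{N-1}\) contributes \(e_1\), and the \(z^{N-2}\) component with coefficient \((N-1)a\) is mapped to \(z^{N-1}\), contributing \((N-1)a\). Summing,
\[
\operatorname{Tr}(MT_a)=e_1+a\sum_{m=0}^{N-1}m=e_1+\binom N2a .
\]

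The main step requiring care is the boundary column \(m=N-1\). There one must remember that \(P(z^N)\) reduces \(z^N\) modulo \(\prod_j(z-x_j)\), and check that only its \(z^{N-1}\) coefficient \(e_1\) enters the diagonal, while the lower-order remainder terms do not. As an alternative check, one can verify \eqref{eq:shifted_lagrange_sum_one} directly via the identity \(\sum_i x_i\ell_i(w)=w-\prod_j(w-x_j)\cdot(\ldots)\), but the trace computation above is cleaner. The hypothesis \(a\ne0\) is not actually needed; at \(a=0\) both identities hold trivially.
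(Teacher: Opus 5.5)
Your proof is correct, and it takes a genuinely different route from the paper's.

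The paper introduces the difference polynomial $R_a(z)=\bigl(Q(z+a)-Q(z)\bigr)/a$, with $Q(z)=\prod_j(z-x_j)$, and notes that $R_a(x_i)=Q'(x_i)\ell_i(x_i+a)$. It then interpolates $R_a$ on the nodes and compares the coefficients of $z^{N-1}$ and $z^{N-2}$ on both sides, using $\ell_i(z)=\bigl(z^{N-1}+(x_i-e_1)z^{N-2}+\cdots\bigr)/Q'(x_i)$.

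You instead identify the two sums as the basis-independent traces $\operatorname{Tr}_{\mathcal V_N}T_a$ and $\operatorname{Tr}_{\mathcal V_N}(MT_a)$. You read them off in the Lagrange basis, where $M$ is diagonal with eigenvalues $x_i$. You then evaluate them in the monomial basis, where $T_a$ is unitriangular and $M$ is the companion matrix of $Q$. Your treatment of the last column, $Mz^{N-1}=z^N-Q(z)$, of which only the coefficient $e_1$ reaches the diagonal, is exactly what is needed.

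The paper's argument is shorter and uses nothing beyond coefficient comparison. Yours places the lemma inside the same trace-on-$\mathcal V_N$ formalism as Theorem~\ref{prop:moment_operator_representation}. Indeed, taking the nodes to be the current configuration, $\mathcal LF_n=w\sum_{\pm}\operatorname{Tr}_{\mathcal V_N}\bigl[\bigl((M\pm1)^n-M^n\bigr)T_{\pm1}\bigr]$ for every $n$. It also makes plain that the hypothesis $a\neq0$, which the paper uses only to divide by $a$, is superfluous.

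One small correction: drop the ``alternative check''. For $N\ge2$ one has $\sum_ix_i\ell_i(w)=w$ identically. Since the evaluation point $x_i+a$ depends on $i$, that identity does not by itself give the second sum.
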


\begin{proof}
The case \(N=1\) is immediate.  Assume \(N\geq2\), and set
\[
    Q(z)=\prod_{i=1}^N(z-x_i)
    =z^N-e_1z^{N-1}+\cdots
\]
and
\[
    R_a(z)=\frac{Q(z+a)-Q(z)}{a}.
\]
The polynomial \(R_a\) has degree at most \(N-1\), and
\[
    R_a(x_i)=\frac{Q(x_i+a)}{a}
    =Q'(x_i)\ell_i(x_i+a).
\]
Interpolating \(R_a\) at the nodes \(x_i\) gives
\[
    R_a(z)
    =
    \sum_{i=1}^NQ'(x_i)\ell_i(x_i+a)\ell_i(z).
\]
Moreover,
\[
    \ell_i(z)
    =
    \frac{z^{N-1}+(x_i-e_1)z^{N-2}+\cdots}{Q'(x_i)}.
\]
Comparing the coefficients of \(z^{N-1}\) yields
\eqref{eq:shifted_lagrange_sum_zero}.  Comparing those of \(z^{N-2}\), and
using
\[
    [z^{N-2}]R_a(z)
    =
    \binom{N}{2}a-(N-1)e_1,
\]
yields \eqref{eq:shifted_lagrange_sum_one}.
\end{proof}

A first consequence is that the total right and left jump rates are
configuration independent:
\begin{equation}
    \sum_{i=1}^Nc_i^+(x)=wN,
    \qquad
    \sum_{i=1}^Nc_i^-(x)=wN.
    \label{eq:total_left_right_rates}
\end{equation}
In particular, the total jump rate is \(2wN\), so the finite-particle process
is non-explosive.

\begin{theorem}
\label{thm:moment_identities}
For every deterministic finite initial configuration
\(S\subset\mathbb Z\), with \(|S|=N\), the first two density moments satisfy
\begin{equation}
    \mu_1^S(t)=\mu_1^S(0),
    \label{eq:first_moment_identity}
\end{equation}
and
\begin{equation}
    \mu_2^S(t)=\mu_2^S(0)+2wN^2t.
    \label{eq:second_moment_identity}
\end{equation}
\end{theorem}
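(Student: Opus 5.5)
The plan is to prove both identities directly from the generator, using the rate formulas \eqref{eq:infinite_sdep_rates} together with Lemma~\ref{lem:shifted_interpolation_sums}. By \eqref{eq:density_moment_probabilistic_representation}, $\mu_n^S(t)=\mathbb E_S[F_n(\eta(t))]$ with $F_n(x)=\sum_i x_i^n$. By Dynkin's formula, justified because \eqref{eq:total_left_right_rates} bounds the total jump rate by $2wN$ and so makes the process non-explosive with $F_1,F_2$ having controlled growth, we have
\[
\frac{d}{dt}\mu_n^S(t)=\mathbb E_S\bigl[(\mathcal L F_n)(\eta(t))\bigr].
\]
So it suffices to show $\mathcal L F_1\equiv0$ and $\mathcal L F_2\equiv 2wN^2$ identically in the configuration.

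For a particle at $x_i$ jumping by $\pm1$, $F_1$ changes by $\pm1$ and $F_2$ by $\pm2x_i+1$. Since moves onto occupied sites already have zero rate by \eqref{eq:infinite_sdep_rates}, we may sum freely over all $i$:
\[
\mathcal L F_1=\sum_i (c_i^+-c_i^-),\qquad
\mathcal L F_2=\sum_i\bigl[2x_i(c_i^+-c_i^-)+(c_i^++c_i^-)\bigr].
\]
Note that $c_i^\pm=w\,\ell_i(x_i\pm1)$ with $\ell_i$ the Lagrange basis at nodes $x_1,\dots,x_N$.

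Applying \eqref{eq:shifted_lagrange_sum_zero} with $a=\pm1$ gives $\sum_i c_i^\pm=wN$, which is \eqref{eq:total_left_right_rates}. Hence $\mathcal L F_1=0$ and the last sum in $\mathcal L F_2$ equals $2wN$. Applying \eqref{eq:shifted_lagrange_sum_one} with $a=\pm1$ gives
\[
\sum_i x_i c_i^\pm=w\Bigl(e_1\pm\tbinom N2\Bigr),
\]
so $2\sum_i x_i(c_i^+-c_i^-)=2w\cdot2\binom N2=2wN(N-1)$. Adding the two pieces, $\mathcal L F_2=2wN(N-1)+2wN=2wN^2$, a constant. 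Integrating in time then yields \eqref{eq:first_moment_identity} and \eqref{eq:second_moment_identity}.

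As a cross-check one could instead use Theorem~\ref{prop:moment_operator_representation}. The $s$-linear coefficient of $\mu_2$ is $\operatorname{Tr}[P_S(XD+DX)]$. With $DX=XD+B$, and the observation that $XD$ and $B$ act on the top-degree coefficient with factors $N-1$ and $1$ (the top-degree part of $P_S$ maps $x^N$ via $-Q$), this coefficient is $N^2$ after setting $s=2wt$. I would keep this only as a remark.

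The only delicate point is justifying Dynkin's formula on the infinite lattice. This is handled by non-explosion together with the bound $|X_i(t)-X_i(0)|\le$ a Poisson$(2wNt)$ number of jumps, which gives integrability of $F_2$ and of the martingale term.
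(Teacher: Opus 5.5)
Your proposal is correct and matches the paper's proof: write $c_i^\pm=w\,\ell_i(x_i\pm1)$, apply Lemma~\ref{lem:shifted_interpolation_sums} with $a=\pm1$ to get $\mathcal L F_1=0$ and $\mathcal L F_2=2wN^2$, then take expectations and integrate in time. Your Poisson$(2wNt)$ bound on the number of jumps also supplies the justification of Dynkin's formula that the paper leaves implicit. In your optional cross-check, note that $XD+DX$ preserves degree on $\mathcal V_N$, so $P_S$ acts as the identity and the trace is simply $\sum_{m=0}^{N-1}(2m+1)=N^2$; the remark about $P_S$ mapping $x^N$ via $-Q$ plays no role.
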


\begin{proof}
Let
\[
    F_1(x)=\sum_{i=1}^Nx_i,
    \qquad
    F_2(x)=\sum_{i=1}^Nx_i^2.
\]
Since \(c_i^{\pm}(x)=w\ell_i(x_i\pm1)\), Lemma
\ref{lem:shifted_interpolation_sums} gives
\[
\begin{aligned}
    (\mathcal LF_1)(x)
    &=
    w\sum_{i=1}^N
    \left[\ell_i(x_i+1)-\ell_i(x_i-1)\right]
    =0.
\end{aligned}
\]
This proves \eqref{eq:first_moment_identity}.

For the second moment,
\[
\begin{aligned}
    (\mathcal LF_2)(x)
    =w\sum_{i=1}^N
    \bigl[
        (2x_i+1)\ell_i(x_i+1)
        +(-2x_i+1)\ell_i(x_i-1)
    \bigr].
\end{aligned}
\]
Using \eqref{eq:shifted_lagrange_sum_zero} and
\eqref{eq:shifted_lagrange_sum_one} with \(a=1\) and \(a=-1\), the sum in
brackets is
\[
\begin{aligned}
    &2\left[
      \sum_i x_i\ell_i(x_i+1)
      -\sum_i x_i\ell_i(x_i-1)
    \right]
    +\sum_i\left[\ell_i(x_i+1)+\ell_i(x_i-1)\right]
    \\
    &\hspace{2cm}
    =2N(N-1)+2N=2N^2.
\end{aligned}
\]
Therefore \(\mathcal LF_2=2wN^2\).  Taking expectations and integrating in
time proves \eqref{eq:second_moment_identity}.
\end{proof}

The random total position
\[
    F_1(t)=\sum_{i=1}^NX_i(t)
\]
is therefore a martingale, not a pathwise constant.  Since each jump changes it
by \(\pm1\), while the total jump rate is \(2wN\), its predictable quadratic
variation is \(2wNt\).  For a deterministic initial configuration,
\begin{equation}
    \operatorname{Var}F_1(t)=2wNt,
    \qquad
    \operatorname{Var}\left(\frac{F_1(t)}{N}\right)=\frac{2wt}{N}.
    \label{eq:center_of_mass_variance}
\end{equation}
Thus the expected center of mass is conserved, whereas the random center of
mass diffuses.

\subsection{Exact density profile for a compact block}
\label{sec:finite_block_melting}

We now specialize the general infinite-volume density formula to the compact
block \(S_N\) defined in \eqref{eq:block_initial_condition}. For \(v\in S_N\),
the corresponding Lagrange polynomial is
\begin{equation}
    \ell_v(u)
    =
    \prod_{\substack{j=1\\j\neq v}}^N
    \frac{u-j}{v-j}
    =
    \frac{(-1)^{N-v}}{(v-1)!(N-v)!}
    \prod_{\substack{j=1\\j\neq v}}^N(u-j).
    \label{eq:block_lagrange_factor}
\end{equation}
Specializing
\eqref{eq:exact_density_formula_infinite_symmetric} to \(S=S_N\), and writing
\(s=2wt\), gives
\begin{equation}
    \rho_N(x,s)
    =
    \sum_{u\in\mathbb Z}I_{x-u}(-s)
    \sum_{v=1}^N I_{x-v}(s)\ell_v(u).
    \label{eq:block_melting_formula}
\end{equation}

\begin{remark}
Since \(I_{x-u}(-s)\) is the kernel of \(e^{-sB}\), equation
\eqref{eq:block_melting_formula} may equivalently be written as
\[
    \rho_N(x,s)
    =
    \sum_{v=1}^N
    I_{x-v}(s)\bigl(e^{-sB}\ell_v\bigr)(x).
\]
Moreover, because \(B-1\) lowers polynomial degree by at least two and
\(\deg\ell_v=N-1\), the exponential truncates:
\[
    \rho_N(x,s)
    =
    e^{-s}
    \sum_{v=1}^N I_{x-v}(s)
    \sum_{k=0}^{\lfloor(N-1)/2\rfloor}
    \frac{(-s)^k}{k!}
    \bigl[(B-1)^k\ell_v\bigr](x).
\]
Thus, the infinite convolution in \eqref{eq:block_melting_formula} admits a
finite algebraic representation.
\end{remark}

\begin{remark}
For the block \(S_N=\{1,\ldots,N\}\), Theorem~\ref{thm:moment_identities}
implies that the mean position of the normalized density profile is conserved
and equal to
\[
    \bar x_N=\frac{N+1}{2}.
\]
Its variance is therefore
\begin{equation}
    \sigma_N^2(t)
    :=
    \frac{1}{N}
    \sum_{x\in\mathbb Z}
    (x-\bar x_N)^2\rho_N(x,t)
    =
    \frac{N^2-1}{12}+2wNt.
    \label{eq:block_width_growth}
\end{equation}
Thus, the squared width grows linearly at rate
\[
    \frac{d}{dt}\sigma_N^2(t)=2wN.
\]
This describes the spreading of the normalized one-point density and should
not be confused with the fluctuations of the random center of mass, whose
diffusion coefficient is \(w/N\).
\end{remark}

\subsection{Explicit centered moments}
\label{sec:explicit_moments_catalan}
The block is invariant under reflection about
\[
    \bar x_N=\frac{N+1}{2}.
\]
We therefore introduce the centered moments
\begin{equation}
    \mu_n^{(N)}(s)
    =
    \sum_{x\in\mathbb Z}
    (x-\bar x_N)^n\rho_N(x,s).
    \label{eq:moment_definition_centered}
\end{equation}
All odd centered moments vanish.  This definition applies to both even and odd
\(N\); for even \(N\), the translated lattice consists of half-integers.
Theorem~\ref{prop:moment_operator_representation} applies without change after
translation, with \(X\) replaced by multiplication by the centered coordinate.

For the second moment,
\[
    (X+sD)^2=X^2+s(XD+DX)+s^2D^2.
\]
The operator \(D^2\) strictly lowers degree and has zero trace.  Moreover,
\[
    Dx^m=mx^{m-1}+\text{terms of degree at most }m-3,
\]
so the diagonal coefficients of \(XD\) and \(DX\) on \(x^m\) are \(m\) and
\(m+1\), respectively.  Hence
\[
    \operatorname{Tr}_{\mathcal V_N}(XD+DX)
    =
    \sum_{m=0}^{N-1}(2m+1)=N^2.
\]
Together with the initial centered moment, this gives
\begin{equation}
    \mu_2^{(N)}(s)
    =
    \frac{N(N^2-1)}{12}+N^2s.
    \label{eq:mu2_centered_N}
\end{equation}

The next two even moments are
\begin{equation}
\begin{aligned}
    \mu_4^{(N)}(s)
    ={}&
    \frac{N(3N^4-10N^2+7)}{240}
    +\frac{N^2(N^2+1)}{2}s
    +N(2N^2+1)s^2,
\end{aligned}
\label{eq:mu4_centered}
\end{equation}
and
\begin{equation}
\begin{aligned}
    \mu_6^{(N)}(s)
    ={}&
    \frac{N(N^2-1)(3N^4-18N^2+31)}{1344}
    \\
    &+\frac{N^2(N^2+3)(3N^2+1)}{16}s
    \\
    &+\frac{N(9N^4+40N^2+11)}{4}s^2
    +5N^2(N^2+2)s^3.
\end{aligned}
\label{eq:mu6_centered}
\end{equation}
These expressions follow by expanding the finite-dimensional trace in
\eqref{eq:moment_operator_representation}.  In particular,
\begin{equation}
    [s]\mu_2^{(N)}(s)=N^2,
    \qquad
    [s^2]\mu_4^{(N)}(s)=2N^3+N,
    \qquad
    [s^3]\mu_6^{(N)}(s)=5N^4+10N^2.
    \label{eq:top_low_moments}
\end{equation}
The coefficients \(1,2,5\) are the leading large-\(N\) coefficients of these
highest-time terms, rather than the exact coefficients themselves.

\subsection{Universal leading coefficients and Catalan numbers}
In the following proposition, we show that for the even moments, the coefficient of the maximal power of time
depends only on the number of particles and not on their initial positions. In addition, these coefficients are given by the Catalan numbers.

\begin{proposition}
\label{prop:catalan}
Let \(S\subset\mathbb Z\) be any deterministic set with \(|S|=N\).  For fixed
\(r\geq1\), the coefficient
\[
    \kappa_{r,N}:=[s^r]\mu_{2r}^S(s)
\]
is independent of \(S\), and
\begin{equation}
    \kappa_{r,N}
    =
    C_rN^{r+1}+O_r(N^{r-1}),
    \qquad
    C_r=\frac{1}{r+1}\binom{2r}{r}.
    \label{eq:catalan_leading_coefficient}
\end{equation}
\end{proposition}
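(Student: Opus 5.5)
Start from the trace formula of Theorem~\ref{prop:moment_operator_representation}. The coefficient of $s^r$ in $\mu_{2r}^S(s)$ is $\operatorname{Tr}_{\mathcal V_N}[P_S W_r]$, where $W_r$ is the sum of all words in $X$ and $D$ of length $2r$ containing exactly $r$ letters $D$ and $r$ letters $X$.

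\textbf{Independence of $S$.} Each such word has total degree change at most $0$: every $X$ raises degree by exactly one and every $D$ lowers it by at least one. Hence $W_r$ maps polynomials of degree at most $m$ into polynomials of degree at most $m$, so $W_r$ preserves $\mathcal V_N$. Because $P_S$ is the identity on $\mathcal V_N$, we get $\operatorname{Tr}_{\mathcal V_N}[P_SW_r]=\operatorname{Tr}_{\mathcal V_N}[W_r]$. This depends only on $N$. The same argument underlies the second-moment computation in \eqref{eq:mu2_centered_N}. Translation invariance of $D$ and the commutation relation $[D,X]=B$ show that shifting $X\mapsto X+c$ does not affect the top coefficient, so centering is harmless.

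\textbf{Reduction to a diagonal computation.} $W_r$ is lower triangular on the monomial basis $\{x^m\}_{m<N}$, so the trace is the sum of its diagonal entries. Write $D=\partial+R$, where $R=\sum_{j\ge1}\partial^{2j+1}/(2j+1)!$ lowers degree by at least three. A word contributes to the diagonal entry on $x^m$ only through terms in which each $D$ lowers degree by exactly one, since any use of $R$ lowers the total degree. So we may replace $D$ by $\partial$. The diagonal entry on $x^m$ is then
\[
d_r(m)=[x^m]\sum_{\text{words}} w(X,\partial)\,x^m .
\]

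\textbf{Computing $d_r(m)$.} For a word, read right to left, with $X$ acting as an up-step and $\partial$ as a down-step. The diagonal coefficient is the product, over the $\partial$ steps, of the current exponent. Equivalently it is $\prod(m+h_j)$, where $h_j$ is the height of the path just before the $j$-th down-step, starting at height $0$ and ending at height $0$. Paths may go below zero, down to $-m$; a factor vanishes if the path reaches height $-m-1$, but this is automatic. Then $d_r(m)$ is a polynomial in $m$ of degree $r$, and its leading coefficient $\binom{2r}{r}$ counts all bridges.

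Summing over $m$ gives
\[
\sum_{m=0}^{N-1} d_r(m)=\binom{2r}{r}\frac{N^{r+1}}{r+1}+c\,N^r+O(N^{r-1}).
\]
This matches $C_r N^{r+1}$ at leading order.

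\textbf{Main obstacle.} The stated error is $O_r(N^{r-1})$, so the $N^r$ term must vanish. Its coefficient comes from two sources: the $N^r/2$ correction in $\sum_m m^r$, and the subleading coefficient of $d_r(m)$, which is $\sum_{\text{paths}}\sum_j h_j$. I would handle this with the reflection $h\mapsto -h$ combined with word reversal. The adjoint-type symmetry $X\leftrightarrow X$, $\partial\leftrightarrow-\partial^{*}$ relates $d_r(m)$ to $d_r(-m-1)$, up to sign, suggesting that $d_r(m)$ is symmetric about $m=-\tfrac12$. Such a symmetry makes $\sum_{m<N}d_r(m)$ an odd polynomial in $N$ of degree $r+1$, up to parity, which kills the $N^r$ term. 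The checks $N^2$, $2N^3+N$ and $5N^4+10N^2$ support this. Proving that symmetry precisely, via the identity $\sum_j(2h_j+1)\equiv0$ in distribution over bridges, is the step I expect to be hardest.
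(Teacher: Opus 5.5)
Your outline follows the paper's proof up to the last step: independence of $S$ because balanced words preserve $\mathcal V_N$ and $P_S$ is the identity there, the diagonal entry $\prod_j(m+h_j(W))$ from the leading part of $D$, and the leading term $\binom{2r}{r}N^{r+1}/(r+1)$. The gap is the step you flag yourself. The absence of an $N^r$ term is part of the statement, but you only make it plausible (``suggesting'', plus the checks for $r\le 3$).

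The identity you propose to prove is also misstated. With $h_j$ the height just before the $j$-th down-step (your convention, which gives the factor $m+h_j$), the $N^r$ coefficient of $\sum_{m<N}d_r(m)$ is $-\tfrac12\binom{2r}{r}+\tfrac1r\sum_{W}\sum_j h_j(W)$. So what you need is $\sum_{W}\sum_{j}\bigl(2h_j(W)-1\bigr)=0$, i.e.\ average height $+\tfrac12$. Your $\sum_j(2h_j+1)$ does not vanish: for $r=1$ the two words have heights $0$ and $1$, giving $4$. The appeal to an adjoint symmetry $\partial\leftrightarrow-\partial^{*}$ is not needed and is too vague to serve as a proof.

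The missing idea is a one-line bijection on words, which is what the paper uses. Reverse the order of the letters of $W$ while keeping the letters themselves; this is an involution of the set $\mathcal W_r$ of balanced words. In path terms the new path is the old one run backwards and reflected, $q_i=-p_{2r-i}$. So a down-step leaving height $h$ in $W$ becomes a down-step leaving height $1-h$ in the reversed word. Reindexing the sum over $\mathcal W_r$ by this involution gives
\[
    d_r(m)=\sum_{W\in\mathcal W_r}\prod_{j=1}^r\bigl(m+1-h_j(W)\bigr)=(-1)^r\,d_r(-1-m).
\]
Hence $d_r$ is a polynomial in $m+\tfrac12$ containing only powers of the parity of $r$. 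In particular $d_r(m)=\binom{2r}{r}(m+\tfrac12)^r+Q_{r-2}(m)$ with $\deg Q_{r-2}\le r-2$. The paper records only the first-order consequence $\sum_W\sum_j(h_j-\tfrac12)=0$, which already suffices.

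It remains to note that $\sum_{m=0}^{N-1}(m+\tfrac12)^k=\frac{N^{k+1}}{k+1}+O(N^{k-1})$ has no $N^k$ term. For example, this sum, extended to a polynomial $F(N)$, satisfies $F(-N)=(-1)^{k+1}F(N)$. Summing over $m$ then gives $\kappa_{r,N}=C_rN^{r+1}+O_r(N^{r-1})$. So the step you expected to be hardest is immediate once the reversal is done on words rather than on operators.
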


\begin{proof}
By Theorem~\ref{prop:moment_operator_representation}, the coefficient of
\(s^r\) is a sum over all balanced words of length \(2r\), containing \(r\)
letters \(X\) and \(r\) letters \(D\).  Let \(\mathcal W_r\) denote this set.
A balanced word maps every monomial \(x^m\), with \(m<N\), to a polynomial of
degree at most \(m\).  Its image therefore belongs to \(\mathcal V_N\), and
\(P_S\) acts as the identity.  Consequently,
\begin{equation}
    \kappa_{r,N}
    =
    \sum_{W\in\mathcal W_r}
    \operatorname{Tr}_{\mathcal V_N}W,
    \label{eq:catalan_word_trace_sum}
\end{equation}
which proves that \(\kappa_{r,N}\) is independent of \(S\).

Write a word in application order as
\[
    W=A_{2r}\cdots A_1,
    \qquad
    A_p\in\{X,D\},
\]
with \(A_1\) acting first.  Associate with \(W\) a path with an up step for
\(X\) and a down step for \(D\).  Let
\(h_1(W),\ldots,h_r(W)\) be the heights immediately before the successive down
steps.  Since
\[
    Dx^m=mx^{m-1}+\text{terms of degree at most }m-3,
\]
only the leading part of every \(D\) can contribute to the coefficient of
\(x^m\) in \(Wx^m\).  Therefore the diagonal coefficient is exactly
\begin{equation}
    [x^m]Wx^m
    =
    \prod_{q=1}^r\bigl(m+h_q(W)\bigr).
    \label{eq:catalan_diagonal_word}
\end{equation}

Let \(W^{\leftarrow}\) be the reversed word.  A down step of height \(h\) in
\(W\) corresponds to a down step of height \(1-h\) in
\(W^{\leftarrow}\).  Since reversal is an involution of \(\mathcal W_r\),
\begin{equation}
    \sum_{W\in\mathcal W_r}\sum_{q=1}^r
    \left(h_q(W)-\frac12\right)=0.
    \label{eq:height_reversal_cancellation}
\end{equation}
Expanding the exact product \eqref{eq:catalan_diagonal_word} around
\(m+\tfrac12\), and then summing over all words, gives
\[
    \sum_{W\in\mathcal W_r}
    \prod_{q=1}^r\bigl(m+h_q(W)\bigr)
    =
    \binom{2r}{r}\left(m+\frac12\right)^r
    +Q_{r-2}(m),
\]
where \(Q_{r-2}\) is a polynomial of degree at most \(r-2\) (and is absent
when \(r=1\)).  Hence
\[
    \kappa_{r,N}
    =
    \binom{2r}{r}
    \sum_{m=0}^{N-1}\left(m+\frac12\right)^r
    +O_r(N^{r-1}).
\]
The midpoint power sum satisfies
\[
    \sum_{m=0}^{N-1}\left(m+\frac12\right)^r
    =
    \frac{N^{r+1}}{r+1}+O_r(N^{r-1}),
\]
with no term of order \(N^r\).  This proves
\eqref{eq:catalan_leading_coefficient}.
\end{proof}

The appearance of Catalan numbers is, in fact, natural. In
\cite{ZahraDubailSchutz2025}, it was shown that, after the appropriate
large-time rescaling, the density profile converges to a semicircle law.
The even moments of the semicircle distribution are precisely the Catalan
numbers, up to the normalization fixed by its support. This is also consistent
with the random-matrix interpretation of the model. Indeed, for a fixed number
of particles evolving on an infinite lattice, the density becomes dilute at
large times. In this regime, the lattice constraint becomes asymptotically
irrelevant, and the system is expected to recover the behavior of the
continuous Dyson gas.

\begin{figure}
    \centering
\includegraphics[width=0.49\linewidth]{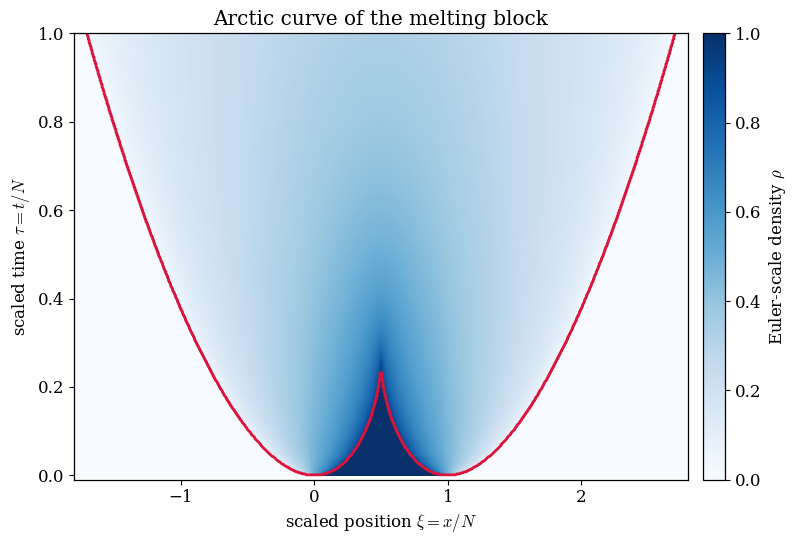}
\includegraphics[width=0.49\linewidth]{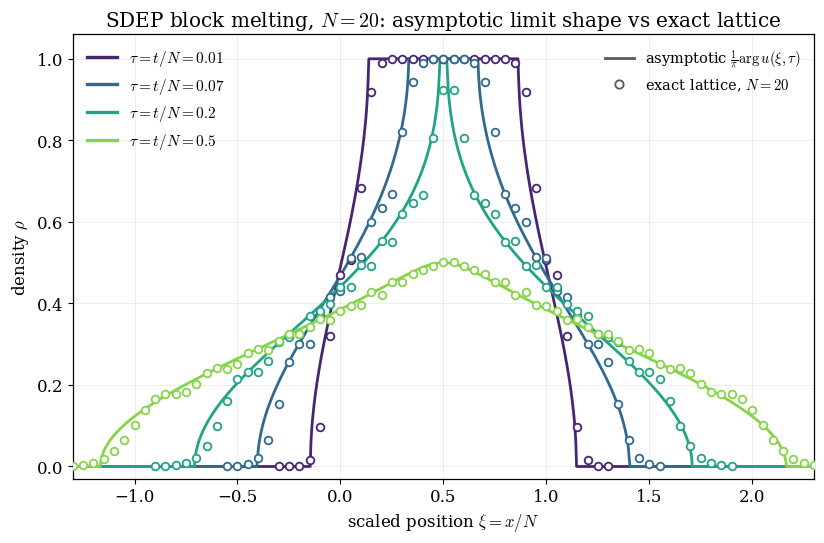}
\caption{%
Euler-scale melting of the block $S_N=\{1,\dots,N\}$, with $\xi=x/N$,
$\tau=t/N$ and $w=\tfrac12$.
\emph{Left:} phase diagram in the $(\xi,\tau)$ plane; colour is the limiting
density $\rho=\tfrac1\pi\arg u(\xi,\tau)$ (Theorem~\ref{thm:euler_block_profile}),
and the red arctic curve $\partial\mathcal L$ separates the liquid region
$0<\rho<1$ from the frozen phases $\rho=0$ (empty) and $\rho=1$ (full). The full
tongue closes at the cusp $(\tfrac12,\tfrac14)$;
\emph{Right:} density profiles at different times. Solid curves
are the limit shape $\tfrac1\pi\arg u(\xi,\tau)$; symbols are the exact
finite-lattice result~\eqref{eq:block_melting_formula} at $N=20$.
}
\label{fig:arctic_and_profiles}
\end{figure}

\section{Euler-scale limit and arctic curve of a single initial block}
\label{sec:block_euler_asymptotics}
The large-scale behavior of the SDEP was conjectured in
\cite{ZahraDubailSchutz2025} by solving the corresponding Euler-scale
conservation law. In particular, the melting of a single particle block was
studied, and the limiting density profile was expressed in terms of the roots
of a cubic equation. The nature of these roots distinguishes the liquid and
frozen regions. In this section, we recover these results directly from the
exact block-melting formula derived above, by means of a steepest-descent
analysis. The resulting phase diagram and representative density profiles are shown in
Figure~\ref{fig:arctic_and_profiles}.


We fix the microscopic timescale by setting
\[
    w=\frac12,
\]
so that the scaled time \(2wt\) coincides with the physical time \(t\). We
consider the Euler scaling
\begin{equation}
    x_N=\lfloor \xi N\rfloor,
    \qquad
    t_N=\tau N,
    \qquad
    \xi\in\mathbb R,
    \quad
    \tau>0.
    \label{eq:euler_scaling}
\end{equation}

For \((\xi,\tau)\in\mathbb R\times(0,\infty)\), consider the cubic equation
\begin{equation}
    \tau a^3
    +(2-2\xi-\tau)a^2
    +(2\xi-\tau)a
    +\tau
    =0.
    \label{eq:euler_cubic}
\end{equation}

\begin{definition}[Liquid region]
\label{def:euler_liquid_region}
Let \(\mathcal L\subset\mathbb R\times(0,\infty)\) be the set of points
\((\xi,\tau)\) for which \eqref{eq:euler_cubic} has exactly one real root and
one non-real conjugate pair. We denote by \(u(\xi,\tau)\) the root in the upper
half-plane and take
\[
    \arg u(\xi,\tau)\in(0,\pi).
\]
\end{definition}

We can now state the main result of this section.

\begin{theorem}[Euler-scale block profile]
\label{thm:euler_block_profile}
Let \(x_N=\lfloor \xi N\rfloor\) and \(t_N=\tau N\). If
\((\xi,\tau)\in\mathcal L\), then
\begin{equation}
    \lim_{N\to\infty}
    \rho_N(x_N,t_N)
    =
    \frac{\arg u(\xi,\tau)}{\pi},
    \label{eq:euler_density_limit}
\end{equation}
where \(u(\xi,\tau)\) is the upper-half-plane root of
\eqref{eq:euler_cubic}. The convergence is uniform for
\((\xi,\tau)\) in compact subsets of \(\mathcal L\).
\end{theorem}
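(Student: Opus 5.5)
The plan is to turn the finite formula \eqref{eq:block_melting_formula} into a double contour integral with integrand of the form $e^{N\,\mathcal S(z,\zeta;\xi,\tau)}$ times slowly varying factors, and then do a steepest-descent analysis whose critical-point equation reduces to the cubic \eqref{eq:euler_cubic}. I start from the operator form of the Remark after \eqref{eq:block_melting_formula},
\[
\rho_N(x,s)=\sum_{v=1}^N I_{x-v}(s)\,(e^{-sB}\ell_v)(x),
\]
and insert the generating-function integral
\[
I_n(s)=\oint\frac{dz}{2\pi i\,z^{n+1}}e^{\frac s2(z+z^{-1})}.
\]
For the Lagrange polynomials of the block I use their factorial form \eqref{eq:block_lagrange_factor}, which up to an explicit sign $(-1)^{N-v}$ is a product of two binomial coefficients in $u$, namely $\binom{u-1}{v-1}\binom{N-u}{N-v}$. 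I then write $\ell_v(u)$ as a contour integral of $q^{u}$ against a $v$-dependent weight, for instance by writing each binomial via $\binom{m}{k}=\oint (1+w)^m w^{-k-1}\,dw/(2\pi i)$ and combining. The point of this step is that $e^{-sB}$ acts diagonally on exponentials, $e^{-sB}q^u=e^{-\frac s2(q+q^{-1})}q^u$, which is the Fourier statement of Lemma~\ref{lem:B_exponential_kernel}. So the infinite $u$-convolution disappears and is replaced by an explicit exponential factor. The remaining sum over $v\in\{1,\dots,N\}$ is a finite geometric sum in a ratio such as $z/q$. It produces a factor $1/(z-q)$ together with a boundary term carrying $(z/q)^N$. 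I expect to obtain
\[
\rho_N(x,t)=\oint\!\!\oint\frac{dz\,dq}{(2\pi i)^2}\,\frac{e^{N[\mathcal S(z)-\mathcal S(q)]}}{z-q}\,(1+o(1)),
\]
with $\mathcal S(z)=\frac\tau2(z+z^{-1})-\xi\log z+(\text{logarithms from the binomial weights})$.

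Next I locate the critical points. Setting $\mathcal S'(z)=0$, and after the change of variable that linearises the binomial logarithms (presumably $a=z$ or $a=1+w$), I check that the critical equation is exactly \eqref{eq:euler_cubic}. In $\mathcal L$ it has a real root and a conjugate pair $u,\bar u$. Following the standard Okounkov-type argument for such kernels, I deform the $q$- and $z$-contours so that $\operatorname{Re}[\mathcal S(z)-\mathcal S(q)]<0$ everywhere on them except at $u$ and $\bar u$. The two contours must cross during this deformation, and the crossing picks up the residue at $z=q$ from the pole $1/(z-q)$. That residue is $\frac{1}{2\pi i}\int dq/q$ taken along an arc of the unit-type contour from $\bar u$ to $u$. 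This gives $\frac{1}{2\pi i}\bigl(\log u-\log\bar u\bigr)=\arg u/\pi$, which is \eqref{eq:euler_density_limit}. The remaining double integral over the deformed contours is exponentially small away from the saddles and $O(N^{-1/2})$ near them, so it vanishes in the limit.

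Uniformity on compacts of $\mathcal L$ follows because the saddle points $u,\bar u$ depend continuously on $(\xi,\tau)$ and stay separated from the real axis and from the poles there. The deformed contours can therefore be chosen locally uniformly. The replacement $x_N=\lfloor\xi N\rfloor$ versus $\xi N$ only changes the integrand by bounded factors. As a consistency check, at $\tau\to0^+$ the formula should return the indicator of $[0,1]$, and the liquid-region boundary of \eqref{eq:euler_cubic}, where its discriminant vanishes, should reproduce the arctic curve with cusp $(\tfrac12,\tfrac14)$.

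The main obstacle is the global contour topology. I must show that in all of $\mathcal L$ the steepest-descent contours through $u,\bar u$ can be realised as closed curves enclosing the correct poles: $z=0$ for the Bessel factor and the poles coming from the binomial weights. The sign structure of $\operatorname{Re}\mathcal S$ must also permit the crossing only along the arc joining $\bar u$ to $u$ that passes through the correct side of the real root. To do this I would first analyse the level lines of $\operatorname{Re}\mathcal S$ by counting the zeros of $\mathcal S'$, which are the three roots of the cubic plus possibly fixed points from the logarithmic singularities. Each sector then has a definite sign, and I would pick the contours by following the descent lines out of $u$ and $\bar u$ into those sectors. If this turns out to be delicate, I would settle it at one convenient point of $\mathcal L$ and extend by continuity in $(\xi,\tau)$ within the connected liquid region.
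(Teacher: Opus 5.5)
Your architecture is the paper's. You have a double contour integral with phase $\Psi_{\xi,\tau}$, saddles given by the cubic, and a crossing of the deformed contours whose residue, integrated along $\Sigma$ from $\bar u$ to $u$, yields $\arg u/\pi$, plus an $O(N^{-1/2})$ remainder. Your steepest-descent step is at the same level of detail as Lemma~\ref{lem:euler_steepest_descent_contours}, and you are right that the global contour topology is the delicate point.

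The gap is the step you leave as ``I expect to obtain'', namely the exact double-contour formula. This is the only place where the phase, and hence the cubic, is determined, and your sketched mechanism does not produce it. Writing the two binomials in $\ell_v(u)=\binom{u-1}{v-1}\binom{N-u}{N-v}$ (incidentally, no sign is needed) as contour integrals gives a triple integral. Its $u$-dependence is $\bigl((1+w)/(1+w')\bigr)^u$ and its $v$-sum is geometric in $zw'/w$. It is not a single $q^u$-integral with a sum geometric in $z/q$.

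More importantly, the boundary factor has to be $\bigl((z-1)/(q-1)\bigr)^N$, not $(z/q)^N$. It is the term $\log(a-1)$ in $\Psi_{\xi,\tau}(a)=\frac\tau2(a+a^{-1})-\xi\log a+\log(a-1)$ that turns $\Psi'_{\xi,\tau}=0$ into \eqref{eq:euler_cubic}. If the only extra logarithm came from $(z/q)^N$, the saddle equation would be $\tau a^2+2(1-\xi)a-\tau=0$. Its roots are always real, so there would be no liquid region. Since you never identify the ``logarithms from the binomial weights'', your assertion that the saddle equation is the cubic is unsupported.

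Two repairs are available.
\begin{itemize}
\item The paper writes $\sum_v I_{x-v}(t)\ell_v(x')$ in Newton forward-difference form $\sum_{k<N}\binom{x'-1}{k}(\Delta^kf)(1)$. Then $\binom{m}{k}=\oint_{|q-1|=r}q^m(q-1)^{-k-1}\frac{dq}{2\pi i}$ and $(\Delta^kf)(1)=\oint\frac{dz}{2\pi i\,z}z^{-(x-1)}e^{\frac t2(z+z^{-1})}(z-1)^k$ make the $k$-sum geometric in $(z-1)/(q-1)$.
\item Your own binomial idea also works if you do the $v$-sum as a Vandermonde convolution, $\sum_v z^v\ell_v(u)=z\,[y^{N-1}](1+zy)^{u-1}(1+y)^{N-u}$. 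The substitution $q=(1+zy)/(1+y)$ then turns this into $z\oint_{q\approx1}\frac{dq}{2\pi i}\frac{q^{u-1}}{z-q}\bigl(\frac{z-1}{q-1}\bigr)^N$.
\end{itemize}
Either way, after the $u$-sum one obtains
\[
\rho_N(x_N,t_N)=\oint\frac{dz}{2\pi i\,z}\,e^{N\Psi_N(z)}\oint\frac{dq}{2\pi i}\,\frac{e^{-N\Psi_N(q)}}{z-q},
\]
with $z$ on a circle about $0$ and $q$ on a small loop about $1$ not enclosing $z$. This is $-K_N$ in \eqref{eq:euler_K_phase_form}, i.e.\ your guess with $\mathcal S=\Psi_N$, holding exactly and without any $(1+o(1))$.

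Your sketch also glosses over two details.
\begin{itemize}
\item In the geometric-sum route, the ``$1$'' term of $\frac{1}{q-z}\bigl[1-(\frac{z-1}{q-1})^N\bigr]$ is not negligible. On intersecting contours it is an $O(1)$ arc integral of $\frac{dz}{2\pi i\,z}$, which cancels the $q=z$ residue of the other term; this is the paper's $\rho_N=I_N^{(A)}-I_N^{(B)}=-K_N$. It vanishes only for disjoint contours.
\item The factor $1/z$ from the Bessel generating function must stay in the integrand. Without it the crossing term would be $\operatorname{Im}u/\pi$ instead of $\arg u/\pi$.
\end{itemize}
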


The arctic curve is the boundary \(\partial\mathcal L\), along which two roots
of \eqref{eq:euler_cubic} coalesce. Equivalently, it is given by the vanishing
of the discriminant of the cubic:
\begin{equation}
    (2\xi-1)^4
    +
    \bigl(4\tau^2-20\tau-2\bigr)(2\xi-1)^2
    -64\tau^3+48\tau^2-12\tau+1
    =
    0.
    \label{eq:euler_arctic_curve}
\end{equation}

Equation~\eqref{eq:euler_arctic_curve} coincides with the arctic-curve
equation obtained from the hydrodynamic description in
\cite{ZahraDubailSchutz2025}, after shifting the spatial coordinate by
\(\xi\mapsto\xi-\frac12\) to account for the different position of the
initial block.  Indeed, the block considered here occupies
\(S_N=\{1,\ldots,N\}\), whereas the block in
\cite{ZahraDubailSchutz2025} is centered at the origin.  The large-\(\tau\)
limit provides a further connection with the moment analysis of the
preceding section.  In this regime, the arctic boundary satisfies
\[
    \left|\xi-\frac12\right|\sim 2\sqrt{\tau},
\]
and the Euler-scale density approaches the expanding semicircle profile
\[
    \rho(\xi,\tau)
    \sim
    \frac{1}{2\pi\tau}
    \sqrt{4\tau-\left(\xi-\frac12\right)^2}\,
    \mathbf 1_{\{|\xi-\frac12|<2\sqrt{\tau}\}}.
\]
Equivalently,
\[
    \sqrt{\tau}\,
    \rho\left(\frac12+\sqrt{\tau}\,u,\tau\right)
    \longrightarrow
    \frac{1}{2\pi}\sqrt{4-u^2}\,
    \mathbf 1_{\{|u|<2\}}.
\]
The even moments of this limiting distribution are the Catalan numbers:
\[
    \int_{-2}^{2}
    u^{2r}\frac{\sqrt{4-u^2}}{2\pi}\,du
    =C_r.
\]
This agrees with Proposition~\ref{prop:catalan}: setting \(t=N\tau\) and
\(w=\frac12\), its leading contribution gives
\[
    \frac{\mu_{2r}^{S_N}(N\tau)}{N^{2r+1}}
    \sim C_r\tau^r.
\]
Thus, the universal Catalan coefficients found algebraically in the
preceding section are the moment signature of the semicircle profile
emerging from the Euler-scale solution.

\subsection{Proof of Theorem~\ref{thm:euler_block_profile}}
\label{subsec:euler_block_profile_proof}

We begin by deriving an exact double-contour representation of the density.

\subsubsection{Exact double-contour representation}

The exact block formula \eqref{eq:block_melting_formula} reads
\begin{equation}
    \rho_N(x,t)
    =
    \sum_{x'\in\mathbb Z} I_{x-x'}(-t)
    \sum_{y=1}^{N} I_{x-y}(t)
    \prod_{\substack{j=1\\ j\neq y}}^{N}
    \frac{x'-j}{y-j}.
    \label{eq:euler_starting_block_formula}
\end{equation}
We use the convention
\begin{equation}
    I_n(t)
    =
    \int_{-\pi}^{\pi}\frac{dq}{2\pi}
    \exp\{t\cos q+inq\}.
    \label{eq:euler_bessel_convention}
\end{equation}

To account for the shift \(x_N-1\) appearing in the contour representation,
we set
\[
    \xi_N=\frac{x_N-1}{N}.
\]
Fix \(x\in\Z\) and \(t\ge0\), and set
\[
    f(y)=I_{x-y}(t),
    \qquad y\in S_N.
\]
Let \(P_{N-1}\) be the unique polynomial of degree at most \(N-1\) that
interpolates \(f\) on \(S_N\). Equivalently,
\[
    P_{N-1}(X)
    =
    \sum_{y=1}^N f(y)\ell_y(X),
\]
where \(\ell_y\) is the block Lagrange polynomial defined in
\eqref{eq:block_lagrange_factor}. Therefore,
\eqref{eq:euler_starting_block_formula} can be written as
\begin{equation}
    \rho_N(x,t)
    =
    \sum_{x'\in\Z} I_{x-x'}(-t) P_{N-1}(x').
    \label{eq:euler_density_interpolation_form}
\end{equation}
For consecutive nodes one has the Newton interpolation formula
\begin{equation}
    P_{N-1}(X)
    =
    \sum_{k=0}^{N-1}
    \binom{X-1}{k} (\Delta^k f)(1),
    \qquad
    \Delta f(y)=f(y+1)-f(y).
    \label{eq:euler_newton_interpolation}
\end{equation}

\begin{lemma}[Exact double-contour formula]
\label{lem:euler_exact_double_contour}
Let \(0<r<1\).  For every \(x\in\Z\) and \(t\ge0\),
\begin{align}
    \rho_N(x,t)
    & =
    \oint_{|\omega|=1}
    \frac{d\omega}{2\pi i\,\omega}
    \omega^{-(x-1)}
    \exp\left\{\frac{t}{2}\left(\omega+\omega^{-1}\right)\right\}
    \notag \\
    &\quad\times
    \frac{1}{2\pi i}
    \oint_{|z-1|=r}
    \frac{
        z^{x-1}
        \exp\left\{-\frac{t}{2}\left(z+z^{-1}\right)\right\}
    }{z-\omega}
    \left[
        1-
        \left(\frac{\omega-1}{z-1}\right)^N
    \right]dz .
    \label{eq:euler_exact_double_contour}
\end{align}
Both contours are positively oriented.  The apparent pole at \(z=\omega\) is
removable when the two contours meet.
\end{lemma}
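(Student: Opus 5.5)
The plan is to start from the interpolation form \eqref{eq:euler_density_interpolation_form} combined with the Newton formula \eqref{eq:euler_newton_interpolation}, and to represent each of the three ingredients as a contour integral: the interpolated data \(f(y)=I_{x-y}(t)\), the Newton basis polynomials \(\binom{X-1}{k}\), and the convolution against \(I_{x-x'}(-t)\). The finite geometric sum over \(k\) should then collapse into the bracket \(\bigl[1-((\omega-1)/(z-1))^N\bigr]/(z-\omega)\).

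\emph{Step 1 (the \(\omega\)-integral).} Substituting \(\omega=e^{iq}\) in \eqref{eq:euler_bessel_convention}, and using \(I_n=I_{-n}\), gives
\[
f(y)=I_{x-y}(t)=\oint_{|\omega|=1}\frac{d\omega}{2\pi i\,\omega}\,\omega^{-(x-y)}e^{\frac t2(\omega+\omega^{-1})}.
\]
The forward difference acts only on \(\omega^{y}\), and \(\Delta\,\omega^{y}=(\omega-1)\omega^{y}\). Hence
\[
(\Delta^k f)(1)=\oint_{|\omega|=1}\frac{d\omega}{2\pi i\,\omega}\,\omega^{-(x-1)}(\omega-1)^k e^{\frac t2(\omega+\omega^{-1})}.
\]

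\emph{Step 2 (the \(z\)-integral).} For \(0<r<1\) and every integer \(x'\), I claim
\[
\binom{x'-1}{k}=\frac{1}{2\pi i}\oint_{|z-1|=r}\frac{z^{x'-1}}{(z-1)^{k+1}}\,dz .
\]
The integrand's only singularity inside the contour is at \(z=1\); the point \(z=0\) lies outside because \(r<1\). Its residue is the \(k\)-th Taylor coefficient of \(z^{x'-1}\) at \(1\), which is the generalized binomial coefficient \(\binom{x'-1}{k}\). This is the same polynomial in \(x'\) that appears in the Newton formula, and it holds for negative \(x'-1\) too. Getting this right for all \(x'\in\Z\) is why the contour must avoid the origin.

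\emph{Step 3 (the \(x'\)-sum).} Next, sum over \(x'\) under the \(z\)-integral, using the generating function \(\sum_{n\in\Z}I_n(s)z^n=e^{\frac s2(z+z^{-1})}\):
\[
\sum_{x'\in\Z}I_{x-x'}(-t)\,z^{x'-1}=z^{x-1}e^{-\frac t2(z+z^{-1})}.
\]
The interchange of sum and integral is justified because, on the compact contour \(|z-1|=r\), we have \(1-r\le|z|\le1+r\). Meanwhile \(|I_n(t)|\le (t/2)^{|n|}e^{t}/|n|!\) decays superexponentially, so the double series converges absolutely and uniformly. The \(\omega\)-integral is over a compact circle with a bounded integrand, so Fubini applies to all the exchanges.

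\emph{Step 4 (collapse the \(k\)-sum).} Exchange the finite sum over \(k\) with the integrals, and use
\[
\sum_{k=0}^{N-1}\frac{(\omega-1)^k}{(z-1)^{k+1}}=\frac{1}{z-\omega}\Bigl[1-\Bigl(\frac{\omega-1}{z-1}\Bigr)^{N}\Bigr].
\]
This yields exactly \eqref{eq:euler_exact_double_contour}. The right-hand side is a finite sum of terms each regular in \((z,\omega)\) on the contours, so the apparent pole at \(z=\omega\) is removable. The two circles do intersect, but this causes no problem.

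I expect the main care to be needed in Step 2: checking that the contour representation of \(\binom{x'-1}{k}\) is valid for all integers \(x'\), including negative ones, since this fixes \(r<1\). The other delicate point is the absolute-convergence argument in Step 3 that allows the infinite sum over \(x'\) to be moved inside both contour integrals.
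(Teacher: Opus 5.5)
Your proposal is correct and follows essentially the same route as the paper: Newton forward-difference interpolation, the contour representation of $\binom{x'-1}{k}$ on $|z-1|=r<1$ combined with the two-sided Bessel generating function, a unit-circle representation of $(\Delta^k f)(1)$, and the finite geometric sum over $k$. The differences are cosmetic. You substitute $\omega=e^{iq}$ and use $I_n=I_{-n}$ where the paper uses $\omega=e^{-ip}$. You also justify the sum--integral interchanges explicitly via the bound $|I_n(t)|\le (t/2)^{|n|}e^{t}/|n|!$, which the paper leaves implicit.
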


\begin{proof}
From \eqref{eq:euler_newton_interpolation},
\[
    \rho_N(x,t)
    =
    \sum_{k=0}^{N-1}(\Delta^k f)(1)
    \sum_{x'\in\Z}I_{x-x'}(-t)\binom{x'-1}{k}.
\]
The coefficient representation
\[
    \binom{m}{k}
    =
    \frac{1}{2\pi i}
    \oint_{|z-1|=r}
    \frac{z^m}{(z-1)^{k+1}}\,dz
\]
gives, using the two-sided generating function for the modified Bessel
functions,
\begin{equation}
    \sum_{x'\in\Z}I_{x-x'}(-t)\binom{x'-1}{k}
    =
    \frac{1}{2\pi i}
    \oint_{|z-1|=r}
    \frac{
        z^{x-1}\exp\left\{-\frac{t}{2}(z+z^{-1})\right\}
    }{(z-1)^{k+1}}\,dz .
    \label{eq:euler_binomial_bessel_sum}
\end{equation}
On the other hand, the Fourier representation
\eqref{eq:euler_bessel_convention} gives
\[
    (\Delta^k f)(1)
    =
    \int_{-\pi}^{\pi}\frac{dp}{2\pi}
    \exp\{t\cos p+i(x-1)p\}
    (e^{-ip}-1)^k .
\]
Putting \(\omega=e^{-ip}\), this becomes an integral over the positively
oriented unit circle:
\[
    (\Delta^k f)(1)
    =
    \oint_{|\omega|=1}
    \frac{d\omega}{2\pi i\,\omega}
    \omega^{-(x-1)}
    \exp\left\{\frac{t}{2}(\omega+\omega^{-1})\right\}
    (\omega-1)^k .
\]
Inserting this and \eqref{eq:euler_binomial_bessel_sum}, the finite sum over
\(k\) is geometric:
\[
    \sum_{k=0}^{N-1}
    \frac{(\omega-1)^k}{(z-1)^{k+1}}
    =
    \frac{1}{z-\omega}
    \left[
        1-
        \left(\frac{\omega-1}{z-1}\right)^N
    \right].
\]
This proves \eqref{eq:euler_exact_double_contour}.
\end{proof}

The first term in the square brackets in
\eqref{eq:euler_exact_double_contour} will cancel against a residue contribution
from the second term.  This leaves a representation involving a single phase.
Write \(\rho_N=I_N^{(A)}-I_N^{(B)}\), where \(I_N^{(A)}\) is the contribution of
\(1\) and \(I_N^{(B)}\) is the contribution of
\(((\omega-1)/(z-1))^N\).  For fixed \(\omega\), the pole at \(z=\omega\) in
\(I_N^{(A)}\) contributes precisely when \(\omega\) lies inside the circle
\(|z-1|=r\).  The same pole appears in \(I_N^{(B)}\), with the same residue.
Consequently these two contributions cancel, and one obtains
\begin{equation}
    \rho_N(x,t)=-K_N(x,t),
    \label{eq:euler_rho_minus_K}
\end{equation}
where
\begin{align}
    K_N(x,t)
    & =
    \oint_{\Gamma_\omega}\frac{d\omega}{2\pi i}
    \frac{1}{\omega}
    \omega^{-(x-1)}
    \exp\left\{\frac{t}{2}\left(\omega+\omega^{-1}\right)\right\}
    (\omega-1)^N
    \notag \\
    &\quad\times
    \frac{1}{2\pi i}
    \oint_{\Gamma_z}
    \frac{
        z^{x-1}
        \exp\left\{-\frac{t}{2}\left(z+z^{-1}\right)\right\}
    }{(z-\omega)(z-1)^N}
    \,dz .
    \label{eq:euler_K_before_phase}
\end{align}
Here \(\Gamma_\omega\) is a positively oriented contour around \(0\),
\(\Gamma_z\) is a positively oriented contour around \(1\), and the two contours
are disjoint.  The deformation of the \(\omega\)-contour away from the point
\(1\) is harmless because the factor
\[
    (\omega-1)^N
    \operatorname*{Res}_{z=1}
    \left[
        \frac{
            z^{x-1}
            \exp\{-\frac{t}{2}(z+z^{-1})\}
        }{(z-\omega)(z-1)^N}
    \right]
\]
is holomorphic in \(\omega\) near \(1\).  Indeed, writing
\(z=1+\zeta\) and \(\omega=1+u\), the coefficient of \(\zeta^{N-1}\) in
\(u^N/(\zeta-u)\) is a polynomial in \(u\).

Under the scaling \eqref{eq:euler_scaling}, define the finite-\(N\) phase
\begin{equation}
    \Psi_N(a)
    =
    \frac{\tau}{2}\left(a+a^{-1}\right)
    -\xi_N\log a
    +\log(a-1),
    \label{eq:euler_phase_N}
\end{equation}
and the limiting phase
\begin{equation}
    \Psi_{\xi,\tau}(a)
    =
    \frac{\tau}{2}\left(a+a^{-1}\right)
    -\xi\log a
    +\log(a-1).
    \label{eq:euler_phase}
\end{equation}
The logarithms are chosen continuously along each contour segment.  Since the
integrands contain only integer powers, the expressions are single-valued; the
real part of \(\Psi_{\xi,\tau}\) is independent of these choices.  With this
notation, \eqref{eq:euler_K_before_phase} becomes
\begin{equation}
    K_N(x_N,t_N)
    =
    \oint_{\Gamma_\omega}\frac{d\omega}{2\pi i}
    e^{N\Psi_N(\omega)}
    \frac{1}{2\pi i}
    \oint_{\Gamma_z}
    \frac{e^{-N\Psi_N(z)}}{\omega(z-\omega)}\,dz .
    \label{eq:euler_K_phase_form}
\end{equation}

\subsubsection{Critical points and steepest-descent contours}

The critical points of the limiting phase satisfy
\[
    \Psi'_{\xi,\tau}(a)
    =
    \frac{\tau}{2}\left(1-a^{-2}\right)
    -\frac{\xi}{a}
    +\frac{1}{a-1}
    =0.
\]
After clearing denominators, this is precisely the cubic equation
\eqref{eq:euler_cubic} introduced above. Thus, for
\((\xi,\tau)\in\mathcal L\), the phase has one real critical point and the
non-real conjugate pair
\[
    u(\xi,\tau),
    \qquad
    \overline{u(\xi,\tau)}.
\]

The contour deformation underlying the asymptotic analysis is illustrated in Figure~\ref{fig:steepest_descent_geometry}.
For \((\xi,\tau)\in\mathcal L\), the two non-real critical points are
non-degenerate.  On compact subsets of \(\mathcal L\), the saddles remain a
positive distance away from each other and from the singularities \(0\) and
\(1\).  The following standard steepest-descent statement is the analytic input
needed below.

\begin{lemma}
\label{lem:euler_steepest_descent_contours}
Let \(D\Subset\mathcal L\).  Uniformly for \((\xi,\tau)\in D\), the contours
\(\Gamma_\omega\) and \(\Gamma_z\) in \eqref{eq:euler_K_phase_form} can be
deformed to contours \(\widetilde\Gamma_\omega\) and
\(\widetilde\Gamma_z\) with the following properties.
There is an oriented curve \(\Sigma\), from \(\overline{u(\xi,\tau)}\) to
\(u(\xi,\tau)\), along which the two deformed contours cross once.  After the
crossing contribution along \(\Sigma\) is removed, the remaining double integral
\(\widetilde K_N\) satisfies
\begin{equation}
    \widetilde K_N=O(N^{-1/2})
    \label{eq:euler_K_tilde_bound}
\end{equation}
uniformly for \((\xi,\tau)\in D\).
\end{lemma}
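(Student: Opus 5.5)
We study the level lines of \(\operatorname{Re}\Psi_{\xi,\tau}\) through the two non-real saddles \(u\) and \(\bar u\), and build the deformed contours from steepest-descent and steepest-ascent paths. The structure is dictated by \eqref{eq:euler_K_phase_form}. The \(\omega\)-integrand carries \(e^{N\Psi_N(\omega)}\), so \(\widetilde\Gamma_\omega\) must pass through \(u,\bar u\) along directions where \(\operatorname{Re}\Psi\) has a strict maximum on the contour. The \(z\)-integrand carries \(e^{-N\Psi_N(z)}\), so \(\widetilde\Gamma_z\) must pass through the saddles where \(\operatorname{Re}\Psi\) has a strict minimum. At a non-degenerate saddle the level set \(\{\operatorname{Re}\Psi=\operatorname{Re}\Psi(u)\}\) consists of two curves crossing at right angles. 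These cut a neighbourhood into four sectors, two where \(\operatorname{Re}\Psi<\operatorname{Re}\Psi(u)\) and two where it is larger. The paths of steepest descent leave \(u\) in the first pair of sectors and the paths of steepest ascent in the second.

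Next we follow these paths globally, using the behaviour of \(\operatorname{Re}\Psi\) near the singularities.
\begin{itemize}
\item At \(a\to0\), \(\operatorname{Re}\Psi\to+\infty\) in directions where \(\operatorname{Re}(a^{-1})>0\) and \(\to-\infty\) where \(\operatorname{Re}(a^{-1})<0\).
\item At \(a\to\infty\) the same holds with \(a\) in place of \(a^{-1}\).
\item At \(a\to1\), \(\operatorname{Re}\Psi\to-\infty\).
\item By conjugation symmetry \(\overline{\Psi(\bar a)}=\Psi(a)\), so it suffices to work in the upper half-plane and reflect.
\end{itemize}
The descent paths from \(u\) can only end at \(1\), at \(0\) along the left half-axis direction, or at \(\infty\) on the left. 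The ascent paths can only end at \(0\) from the right or at \(\infty\) on the right. The real critical point together with a counting argument fixes the topology. \(\Psi'\) has exactly three zeros, so \(\operatorname{Re}\Psi\) has no other saddles and the gradient flow lines cannot branch. One of these paths must separate \(0\) from \(1\).

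Using this, we take \(\widetilde\Gamma_\omega\) to be the closed curve formed by the two descent paths through \(u\), their conjugates through \(\bar u\), and their continuation around \(0\), passing into the sector at \(0\) where \(\operatorname{Re}\Psi\to-\infty\). The contour \(\widetilde\Gamma_z\) is formed by the ascent paths and encircles \(1\); near \(z=1\) we have \(-\operatorname{Re}\Psi\to+\infty\), so the \(z\)-contour is kept on a small circle there. The topology forces this circle to enclose the point \(1\) while the \(\omega\)-loop encloses \(0\) but not \(1\). The two closed curves then meet exactly at \(u\) and \(\bar u\). Deforming the original disjoint contours \(\Gamma_\omega,\Gamma_z\) into this position forces part of \(\widetilde\Gamma_\omega\) to pass through the \(z\)-contour. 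The residue at \(z=\omega\), picked up for \(\omega\) on an arc \(\Sigma\) from \(\bar u\) to \(u\), gives the crossing contribution
\[
\frac{1}{2\pi i}\int_\Sigma \frac{d\omega}{\omega},
\]
which later yields \(\arg u/\pi\). Since \(e^{N\Psi_N}\) and \(e^{-N\Psi_N}\) cancel there, the crossing term is \(N\)-independent.

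For the remaining integral \(\widetilde K_N\), we bound
\[
\bigl|e^{N(\Psi_N(\omega)-\Psi_N(z))}\bigr|\le e^{-cN(|\omega-u|^2+|z-u|^2)}
\]
near each saddle, and by \(e^{-cN}\) away from the saddles. The factor \(\Psi_N-\Psi\) is \((\xi-\xi_N)\log a=O(1/N)\), so it perturbs this only by a bounded factor. The only danger is the factor \(1/(z-\omega)\) near the meeting points, where the two contours cross transversally. In local coordinates this gives
\[
\iint \frac{e^{-cN(|\alpha|^2+|\beta|^2)}}{|\alpha-\beta|}\,d\alpha\,d\beta\,\sim\,N^{-1/2},
\]
since \(1/|\alpha-\beta|\) is locally integrable on the product of two transversal lines. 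This gives \eqref{eq:euler_K_tilde_bound}. Uniformity over \(D\Subset\mathcal L\) follows because the saddles, their Hessians \(\Psi''(u)\neq0\), and their distances to \(0\), \(1\), and the real critical point depend continuously on \((\xi,\tau)\). Hence all constants \(c\) and all neighbourhood sizes can be chosen uniformly.

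The main obstacle is the global topological step: showing that the steepest paths from \(u\) land at the claimed singularities (\(0\), \(1\), \(\infty\)) in the claimed sectors, so that the contours really are homotopic to the original ones, with exactly one crossing arc \(\Sigma\). The first approach I would try is a case-free argument via the harmonic function \(\operatorname{Re}\Psi\) and the maximum principle. A bounded region enclosed by level lines and containing no singularity would force \(\operatorname{Re}\Psi\) to be constant there, which is impossible. Combined with the absence of further saddles, this should exclude any other configuration of the paths.
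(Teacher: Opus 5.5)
Your overall strategy is the paper's, which itself gives only a sketch: a descent contour for \(\omega\) and an ascent contour for \(z\) through the conjugate saddles, a residue at \(z=\omega\) collected along an arc \(\Sigma\) from \(\bar u\) to \(u\), and a local Gaussian bound in which \(1/|z-\omega|\) is integrable on two transversal lines, giving \(N^{-1/2}\).

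\textbf{The \(z\)-contour near \(1\).} Your global description of \(\widetilde\Gamma_z\) contains a step that fails. On the \(z\)-contour the integrand has modulus \(e^{-N\operatorname{Re}\Psi(z)}\). This must stay below \(e^{-N\operatorname{Re}\Psi(u)}\), with equality only at \(u,\bar u\). Since \(\operatorname{Re}\Psi\to-\infty\) at \(1\), the integrand behaves like \((e^{-\tau}/|z-1|)^N\) there. A small circle around \(1\) is therefore exactly where the \(z\)-contour must \emph{not} go. On such a circle your bound \(e^{-cN}\) away from the saddles is false, and the contribution is exponentially large.

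The sign in your reasoning is reversed. \(\widetilde\Gamma_z\) must lie in \(\{\operatorname{Re}\Psi\ge\operatorname{Re}\Psi(u)\}\) and surround the whole component of \(\{\operatorname{Re}\Psi<\operatorname{Re}\Psi(u)\}\) that contains \(1\). It is made of the ascent paths from \(u,\bar u\). These end where \(\operatorname{Re}\Psi\to+\infty\), namely at \(0\) approached from \(\operatorname{Re}z>0\) or at \(\infty\) in \(\operatorname{Re}z>0\), so the contour never comes near \(1\).

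Here is a concrete check at \((\xi,\tau)=(\tfrac12,\tfrac12)\). There \(u=i\), \(\operatorname{Re}\Psi(ri)=\tfrac12\log(r+r^{-1})\), and \(\operatorname{Re}\Psi(e^{i\theta})=\tfrac12\cos\theta+\log|2\sin(\theta/2)|\). So \(\widetilde\Gamma_z\) can be the imaginary axis closed through \(\operatorname{Re}z>0\), passing through \(0\) and \(\infty\). \(\widetilde\Gamma_\omega\) can be the unit circle, on which \(\operatorname{Re}\Psi\) is maximal exactly at \(\pm i\). Then \(\Sigma\) is the right half of the unit circle, and the crossing term is \(\tfrac12=\arg u/\pi\).

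\textbf{Branching of gradient lines.} Your claim that the gradient lines from \(u\) ``cannot branch'' because \(\Psi'\) has only three zeros is not correct. At \(\xi=\tfrac12\), \(\operatorname{Re}\Psi\) is invariant under \(a\mapsto 1/\bar a\) for every \(\tau\). Hence the unit circle is a gradient line, and the descent path from \(u\) towards \(-1\) runs straight into the real critical point \(-1\). The topology of the exact steepest paths therefore changes across \(\xi=\tfrac12\), and a compact \(D\Subset\mathcal L\) may cross that line.

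This is repairable. Along a descent path \(\operatorname{Re}\Psi(-1)<\operatorname{Re}\Psi(u)\), so the contour can be rerouted past \(-1\). For uniformity, though, you should work with contours satisfying the strict inequalities \(\operatorname{Re}\Psi(\omega)<\operatorname{Re}\Psi(u)<\operatorname{Re}\Psi(z)\) away from \(u,\bar u\), chosen locally constant in \((\xi,\tau)\), rather than with exact gradient lines.

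\textbf{The \(\Psi_N\) correction.} The term \(N(\Psi_N-\Psi)\) produces a factor \((\omega/z)^{\kappa}\) with \(\kappa\in(1,2]\). This factor is not bounded where the contours reach \(0\) or \(\infty\). It is harmless only because the integrands decay super-exponentially there.
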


\begin{proof}
This is the usual two-contour steepest-descent construction for a pair of
conjugate non-degenerate saddles.  The \(z\)-contour is deformed through the
saddles along steepest-ascent directions for \(\Re\Psi_{\xi,\tau}\), while the
\(\omega\)-contour is deformed through the same saddles along steepest-descent
directions.  The two deformed contours cross along the level line \(\Sigma\)
connecting \(\bar u\) to \(u\).  Since \(D\Subset\mathcal L\), the saddles are
uniformly non-degenerate and stay away from \(0\) and \(1\), so the contours and
all estimates can be chosen uniformly in \((\xi,\tau)\in D\).

Away from small neighborhoods of the saddles there is a uniform
\(\delta>0\) such that
\[
    \Re\bigl(\Psi_{\xi,\tau}(\omega)-\Psi_{\xi,\tau}(z)\bigr)
    \le -\delta .
\]
Those parts of the double integral are exponentially small.  Near each saddle,
a quadratic expansion gives, in suitable local coordinates,
\[
    \Re\bigl(\Psi_{\xi,\tau}(\omega)-\Psi_{\xi,\tau}(z)\bigr)
    \le
    -c\bigl(|\omega-u|^2+|z-u|^2\bigr),
\]
and similarly near \(\bar u\), with \(c>0\) uniform on \(D\).  After the pole
crossing at \(z=\omega\) has been extracted, the remaining local integral is
bounded by the corresponding Gaussian estimate, giving \eqref{eq:euler_K_tilde_bound}.
The replacement of \(\Psi_N\) by \(\Psi_{\xi,\tau}\) changes these estimates only
by \(o(1)\), uniformly on \(D\), because \((\xi_N,\tau)\to(\xi,
\tau)\).
\end{proof}

\subsubsection{Completion of the proof}

\begin{proof}[Proof of Theorem~\ref{thm:euler_block_profile}]
Start from the exact representation \eqref{eq:euler_K_phase_form}.

Start from the exact representation \eqref{eq:euler_K_phase_form}.  Deform the
two contours as in Lemma~\ref{lem:euler_steepest_descent_contours}.  During this
deformation, the pole at \(z=\omega\) is crossed once along the curve
\(\Sigma\).  With \(\Sigma\) oriented from \(\bar u\) to \(u\), the residue of
\(1/[\omega(z-\omega)]\) gives
\begin{equation}
    \widetilde K_N
    =
    K_N
    +
    \int_\Sigma \frac{d\omega}{2\pi i\,\omega} .
    \label{eq:euler_crossing_relation}
\end{equation}
Along the crossing term the exponential factors cancel, since
\(e^{N\Psi_N(\omega)}e^{-N\Psi_N(\omega)}=1\).

The curve \(\Sigma\) is chosen with zero winding around the origin, so
\begin{equation}
    \int_\Sigma \frac{d\omega}{2\pi i\,\omega}
    =
    \frac{\arg u(\xi,\tau)}{\pi}.
    \label{eq:euler_arg_integral}
\end{equation}
Indeed, any continuous branch of the logarithm along \(\Sigma\) has endpoint
arguments \(-\arg u\) and \(\arg u\).  Lemma~\ref{lem:euler_steepest_descent_contours}
therefore implies \(\widetilde K_N\to0\), uniformly on compact subsets of
\(\mathcal L\).  Hence
\[
    K_N
    \longrightarrow
    -\frac{\arg u(\xi,\tau)}{\pi}.
\]
Together with \(\rho_N=-K_N\), as given by
\eqref{eq:euler_rho_minus_K}, this proves
\eqref{eq:euler_density_limit}.
\end{proof}

\begin{remark}[Frozen regions]
At the boundary of \(\mathcal L\), the conjugate pair of roots collides on the
real axis.  When the limiting root lies on the positive real axis, the angle
\(\arg u\) tends to \(0\), and the limiting density tends to \(0\).  When it lies
on the negative real axis, \(\arg u\) tends to \(\pi\), and the limiting density
tends to \(1\).  These are the two frozen phases adjacent to the liquid region.
\end{remark}

\begin{figure}[t]
    \centering
    \includegraphics[width=0.49\linewidth]{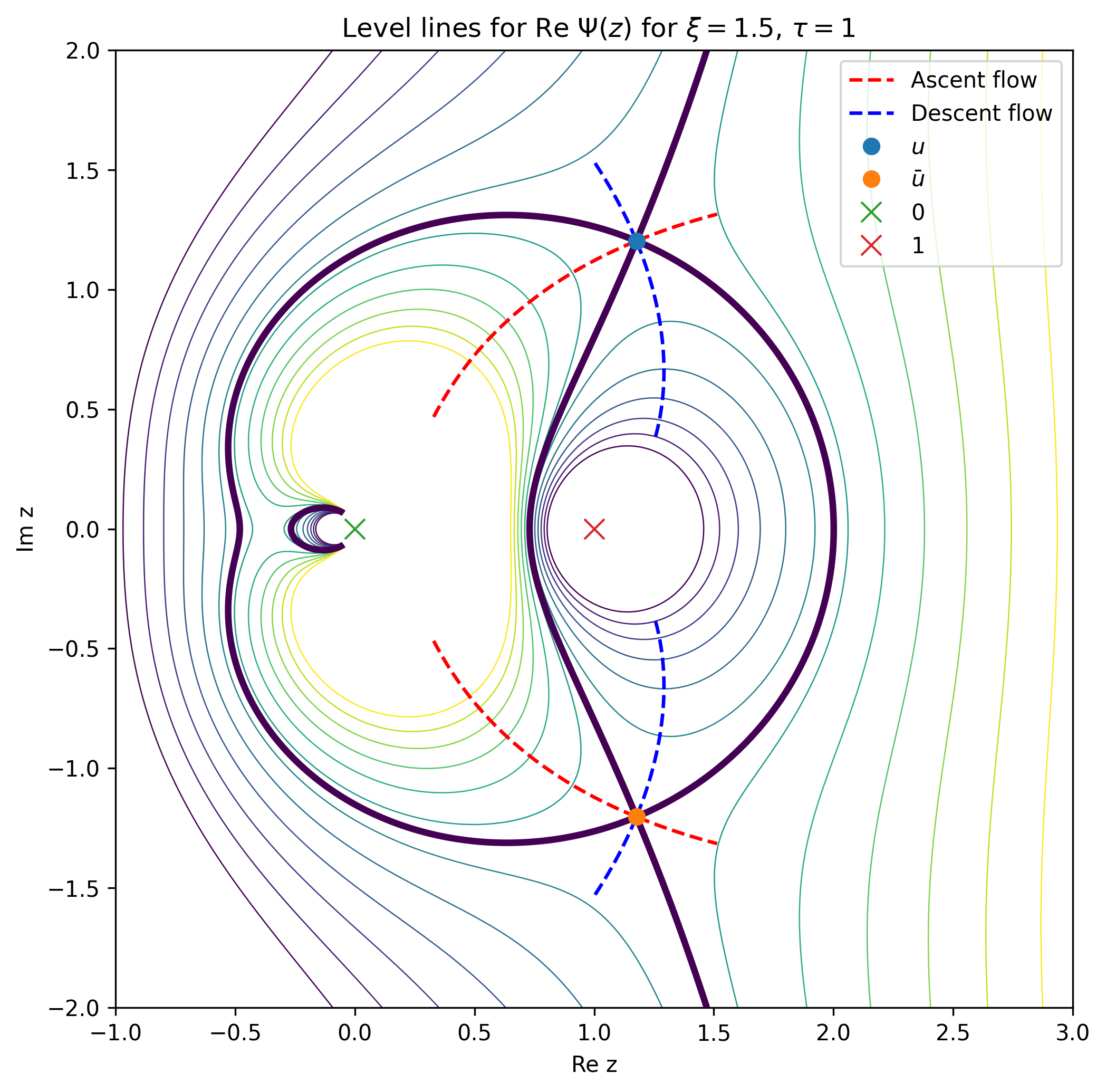}
    \includegraphics[width=0.49\linewidth]{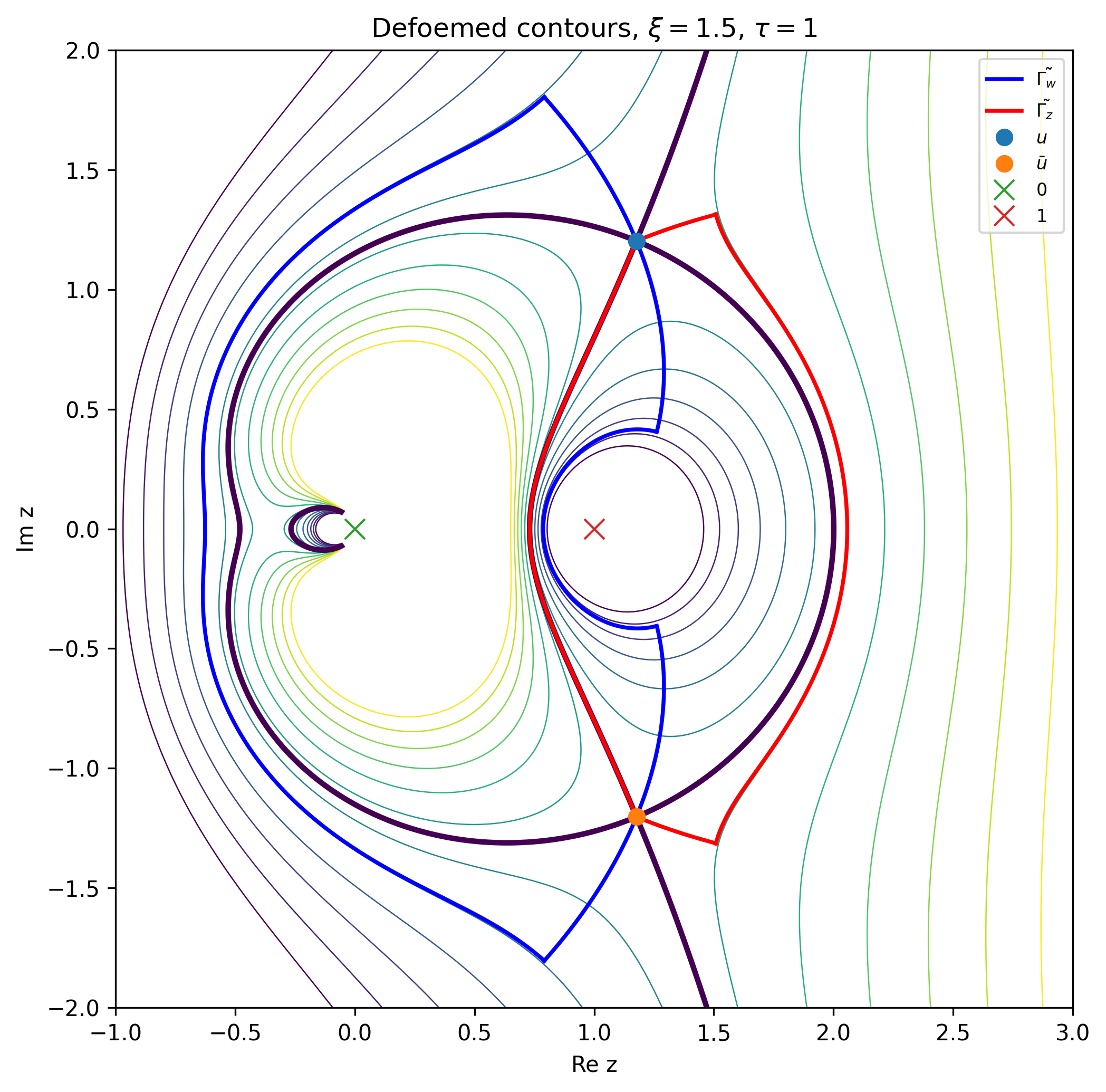}
    \caption{
    Steepest-descent geometry for the phase \(\Psi_{\xi,\tau}\), shown for a
    representative point \((\xi,\tau)\) in the liquid region.  Left: level
    curves of \(\Re\Psi_{\xi,\tau}\).  The thick purple curve is the equal-height
    locus through the relevant critical point, while dashed curves indicate
    ascent and descent directions.  Right: deformation of the original contours
    to the steepest-descent contours \(\widetilde\Gamma_\omega\) and
    \(\widetilde\Gamma_z\).  The Euler-scale density is produced by the crossing
    of these deformed contours.
    }
    \label{fig:steepest_descent_geometry}
\end{figure}

\section{Conclusion}
\label{sec:conclusion}

We have shown that the relaxation of the SDEP from deterministic initial
conditions can be described explicitly by combining the Doob-transform structure
of the process with the free-fermion representation of the XX chain.  This leads
to a mixed determinantal kernel and reduces the computation of the density to a
finite interpolation problem.  For arbitrary deterministic finite initial conditions, the resulting formulas
give a closed finite-dimensional algebra for all one-body density moments and a
universal Catalan pattern in their highest-time coefficients.  For block initial
conditions, they additionally give a genuinely finite representation of the
melting profile and an enhanced one-body spreading coefficient.

A main outcome of the analysis is the derivation of the Euler-scale profile for
the evolution of a single block.  This profile agrees with the one predicted by
the conjectured hydrodynamic description of Ref.~\cite{ZahraDubailSchutz2025}.  The present work
, therefore, provides an exact microscopic verification of that hydrodynamic
picture in a non-trivial deterministic setting.

It would be interesting to
extend the method developed here to arbitrary deterministic initial conditions,
where the interpolation kernel is still explicit, but the asymptotic analysis is
expected to be more involved. 
Another natural direction is the study of fluctuations around the deterministic hydrodynamic profile. In the liquid bulk, the dilute regime connects the SDEP to the continuous Dyson gas and to the non-local macroscopic fluctuation theory developed in \cite{DandekarKrapivskyMallick2024}. Its saddle-point action can be related, up to boundary terms, to an imaginary-time hydrodynamic action, providing a possible route to density and current large deviations. A natural next step is to extend this approach to the lattice regime, where finite-density effects can no longer be neglected.

\section*{Acknowledgments}
A.Z. thanks Dmitry Gangardt, Yasser Bezzaz, and Maksims Arzamasovs for
helpful discussions and for their hospitality at the University of
Birmingham. This work was funded by the ANR-PRME Uniopen project
(ANR-22-CE30-0004-01) and by FCT (Portugal) through project
UIDB/04459/2020 (doi:10.54499/UIDB/04459/2020) and grants
2020.03953.CEECIND and 2022.09232.PTDC.

\end{document}